\newtheorem{theorem}{Theorem}
\newtheorem{lemma}{Lemma}
\newcommand{\pl}{\parallel}
\newcommand{\openr}{\hbox{${\rm I\kern-.2em R}$}}
\newcommand{\openn}{\hbox{${\rm I\kern-.2em N}$}}
\DeclareMathOperator{\E}{\mbox{E}}
\newcommand{\Dstarn}{D^*(\cdot \mid \bar{Q}_{n}^*, Q_{n,W}, \bar{G}_n)}
\newcommand{\Dstarnadapt}{D^*(\cdot \mid \bar{Q}_{n}^\#, Q_{n,W}, \bar{G}_n(\cdot \mid \bar{Q}_n))}
\newcommand{\Dstarnadaptinit}{D^*(\cdot \mid \bar{Q}_{n}, Q_{n,W}, \bar{G}_n(\cdot \mid \bar{Q}_n))}
\newcommand{\Dstarzeroadapt}{D^*(\cdot \mid \bar{Q}_{0}, Q_{0,W}, \bar{G}_0(\cdot \mid \bar{Q}_0))}
\newcommand{\Dstarlim}{D^*(\cdot \mid \bar{Q}^*, Q_{0,W}, \bar{G})}
\newcommand{\Remstar}{R_2(\bar{Q}^*_n, \bar{Q}_0, \bar{G}_n, \bar{G}_0, Q_{0,W})}
\newcommand{\Remstaradapt}{R_2(\bar{Q}^\#_n, \bar{Q}_0, \bar{G}_n(\cdot \mid \bar{Q}_n), \bar{G}_0, Q_{0,W})}
\newcommand{\Remstaradaptinit}{R_2(\bar{Q}_n, \bar{Q}_0, \bar{G}_n(\cdot \mid \bar{Q}_n), \bar{G}_0, Q_{0,W})}
\DeclareMathOperator{\pr}{\mbox{pr}}
\DeclareMathOperator{\ind}{\mathbbm{1}}
\definecolor{mypurple}{RGB}{75,46,151}
\definecolor{mypurple_dark}{RGB}{56.25,34.50,113.25}
\definecolor{mypurple_darker}{RGB}{37.5,23,75.5}
\definecolor{mypurple_light}{RGB}{100,61.3333,201.33333}
\definecolor{codeback}{RGB}{244,243,243}
\title{A nonparametric super-efficient estimator of the average treatment effect}
\author{David Benkeser \\
Department of Biostatistics and Bioinformatics, Emory University \\ \vspace{-0.15in} \\
Wilson Cai and Mark J. van der Laan   \\ 
Division of Biostatistics, UC Berkeley\\
 }
\begin{document}
\maketitle

\begin{abstract} 

Doubly robust estimators of causal effects are a popular means of estimating causal effects. Such estimators combine an estimate of the conditional mean of the outcome given treatment and confounders (the so-called outcome regression) with an estimate of the conditional probability of treatment given confounders (the propensity score) to generate an estimate of the effect of interest. In addition to enjoying the double-robustness property, these estimators have additional benefits. First, flexible regression tools, such as those developed in the field of machine learning, can be utilized to estimate the relevant regressions, while the estimators of the treatment effects retain desirable statistical properties. Furthermore, these estimators are often statistically efficient, achieving the lower bound on the variance of regular, asymptotically linear estimators. However, in spite of their asymptotic optimality, in problems where causal estimands are weakly identifiable, these estimators may behave erratically. We propose two new estimation techniques for use in these challenging settings. Our estimators build on two existing frameworks for efficient estimation: targeted minimum loss estimation and one-step estimation. However, rather than using an estimate of the propensity score in their construction, we instead opt for an alternative regression quantity when building our estimators: the conditional probability of treatment given the conditional mean outcome. We discuss the theoretical implications and demonstrate the estimators' performance in simulated and real data. 

\end{abstract}
{\bf Key words:} causal inference, average treatment effect, asymptotic linearity, efficient influence function, collaborative targeted minimum loss estimation, super efficiency

\section{Introduction}\label{section1}

In many areas of research, the scientific question of interest is often answered by drawing statistical inference about the average effect of a treatment on an outcome. Depending on the setting, this ``treatment'' might correspond to an actual therapeutic treatment, a harmful exposure, or a policy intervention. We use $Y(1)$ to denote the potential outcome of a typical data unit sampled from the population of interest when the unit receives the treatment of interest, and $Y(0)$ to denote the potential outcome if that unit instead receives control. In this work, we focus on estimation of the average treatment effect (ATE), the average difference between $Y(1)$ and $Y(0)$ in the population of interest. 

Often, due to ethical or logistical constraints, we cannot randomly assign data units to receive/not receive the treatment. Thus, in order to draw valid conclusions about the ATE, we require statistical methodology that can control for confounders of the treatment/outcome relationship. Although epidemiological and other applied literatures are still considering the relative merits of various methodologies, the statistical literature has provided direction through consideration of semiparametric efficient methods. The literature provides many examples of such estimators. In some situations, regularized or sieve maximum likelihood estimators can be used (e.g., \citet{Vaart98}); however, this generally requires careful selection of tuning parameters. Existing literature lacks general guidelines for how such parameters can be chosen in practice, which limits the utility of these strategies. On the other hand, methods that are built around a causal effect parameter's efficient influence function offer a more straightforward approach to estimation. Foremost amongst these approaches are one-step estimation \citep{ibragimov1981,pfanzagl1982,Bickeletal97} and targeted minimum loss estimation (TMLE) \citep{vanderLaan&Rubin06,vanderLaan:Rose11}. 

The efficient influence function often depends on the observed data distribution through certain key nuisance parameters. In the context of the ATE, these nuisance parameters are the conditional mean of the outcome given treatment and confounders (the so-called outcome regression, OR), the conditional probability of treatment given confounders (the so-called propensity score, PS), and the distribution function of confounders in the population of interest. Once estimators of these key quantities are available, each methodology provides its own recipe for combining the relevant nuisance estimators into an estimate of the causal effect of interest. Assuming the nuisance estimators satisfy certain regularity conditions, the resulting estimators of the ATE, when appropriately centered and scaled, converge in distribution to a mean-zero Gaussian variate with variance equal to the semiparametric Cramer-Rao efficiency bound for regular estimators. In addition to efficiency, these estimators are also doubly-robust, meaning that they are consistent for the causal effect of interest even if one of the OR or PS is inconsistently estimated. 

One of the key assumptions underlying any methodology for estimating the ATE using observational data is the strong positivity assumption, which stipulates that the PS must be bounded between zero and one almost everywhere. That is, if we define strata of data units based on their observed confounders, any stratum with positive support must have some probability of receiving and not receiving the treatment. If this condition fails, then the ATE is not estimable from observed data. Moreover, even if the condition holds theoretically, in practice there may be small estimated propensity scores -- so-called \emph{practical} violations of the positivity assumption \citep{Petersenetal12}. In such cases, the one-step estimator and TMLE can suffer from erractic, non-robust behavior. 

To combat this behavior, various extensions have been proposed including collaborative TMLE (CTMLE) \citep{vanderLaan:Gruber10,Gruber&vanderLaan10a,Stitelman:vanderLaan10,Wang:Rose:vanderLaan11}. In this approach the OR and PS are estimated \emph{collaboratively}, by selecting an estimate of the PS based on how well it tunes an estimate of the OR. From a theoretical point of view, the goal of CTMLE is generally to provide an estimator that is more robust than TMLE, but that nevertheless maintains TMLE's asymptotic efficiency. CTMLE enjoys additional theoretical benefits in settings where the OR is inconsistently estimated. In this case, the CTMLE enjoys a more general robustness property than TMLE referred to as collaborative double-robustness \citep{vanderLaan:Gruber10}. Many proposed CTMLEs are designed specifically for settings with practical positivity violations, for example by: choosing a truncation level for estimated propensities, selecting variables to be included in the PS, or tuning particular machine learning algorithms \citep{Juetal17,Juetal17a,Juetal18}. These works show that CTMLE can provide greater robustness than TMLE in challenging situations. 

In spite of the putative benefits of CTMLE, these estimators are not widely used. There may be several reasons why. First, the approach involves many decision points for the analyst, who must select an increasingly complex sequence of estimators for the PS and implement each of these estimators. Second, the approach often involves extended computation time relative to traditional doubly-robust methods. In particular, cross-validation is needed to validate which PS method should be selected from amongst the user-chosen sequence. Third, from a theoretical perspective, performing robust inference based on existing CTMLE approaches is also challenging, involving either strong assumptions on nuisance estimators (see Appendix 17 of \citet{vanderLaan:Rose11}) or additional iterative computational steps \citep{vanderLaan14a}. 

In this work, we seek to overcome these limitations by proposing a new approach to CTMLE. The twist in the present proposal relative to existing CTMLE approaches is that we assume the OR estimator is consistent at a fast rate. 
Assuming this consistency, we provide an alternative target for a PS estimator that is explicitly adaptive to the OR. In particular, rather than estimating the true PS, we recommend estimating the propensity for receiving treatment as a function of the estimated OR. This low-dimensional regression can be substituted in place of an estimator of the true PS in a standard TMLE or one-step procedure. We show that, when appropriately scaled, the resultant estimator is asymptotically Normal with variance that is generally smaller than that of an efficient estimator. Thus, our proposal provides a new approach to CTMLE that is tailored both for small- and large-sample and performance.


\section{Background}
\subsection{Identification of ATE}

Suppose we observe $n$ independent copies of the data unit $O := (W, A, Y)$, where $W \in \mathcal{W}$ is a vector of putative confounders, $A \in \{0,1\}$ is a binary treatment, and $Y \in [0,1]$ is the outcome of interest. Our assumption that $Y \in [0,1]$ does not sacrifice any generality of our proposed methodology. We denote by $P_0$ the probability distribution of $O$ and assume that $P_0$ is an element of a statistical model $\mathcal{M}$. We take $\mathcal{M}$ to be a nonparametric model. 

As above, we use $Y(1)$ and $Y(0)$ to denote the counterfactual outcomes under treatment and no treatment, respectively. For $a = 0,1$, we denote by $P_0^a$ the probability distribution of $Y(a)$. The ATE is defined as $\E_{P_0^1}[Y(1)] - \E_{P_0^0}[Y(0)]$, the difference in average outcome if the entire population were assigned to receive $A = 1$ versus $A = 0$. The ATE is identifiable from the observed data under the following assumptions: consistency, $Y = Y(a) \mid A = a$; no interference: $Y_i(a)$ does not depend on $A_j$ for $j \ne i$; ignorability: $A \perp (Y(1),Y(0)) | W$; and positivity: $\pr_{P_0}\{ 0 < \mbox{pr}_{P_0}(A = 1 \mid W) < 1\} = 1$. The first two assumptions are needed in order to have well-defined counterfactual random variables, while the ignorability condition essentially states that there are no unmeasured confounders of $A$ and $Y$. As mentioned in the introduction, the positivity criterion stipulates that every unit has a chance of receiving $A = 1$ and $A = 0$. If these assumptions hold, the average treatment effect is identified by the G-computation formula \begin{equation} \label{gcomp}
    \E_{P_0^1}[Y^{(1)}] - E_{P_0^0}[Y^{(0)}] = \E_{P_0}[\E_{P_0}(Y \mid A = 1, W) - \E_{P_0}(Y \mid A = 0, W)] \ .
\end{equation} 

\subsection{Estimators of the ATE}

For simplicity, we hence consider estimation of $\psi_0 := \E_{P_0}[\E_{P_0}(Y \mid A = 1, W)]$, which we refer to as the treatment-specific mean. Symmetric arguments can be made about $\E_{P_0}[\E_{P_0}(Y \mid A = 0, W)]$, and thus the ATE. Hereafter, when we refer to the OR, it is understood that we are referring to the quantity $\E_{P_0}(Y \mid A = 1, W)$, the regression of $Y$ on $W$ amongst units observed to receive the treatment. 

For each $w \in \mathcal{W}$, we denote by $\bar{Q}_0(w) := \E_{P_0}(Y \mid A = 1, W = w)$ the true OR evaluated at $W = w$ and denote by $\bar{Q}_n(w)$ an estimate of $\bar{Q}_0(w)$ based on $O_1,\dots,O_n$. We use $\bar{\mathcal{Q}}$ to denote the model for the OR implied by $\mathcal{M}$; that is, $\mathcal{Q}$ is a collection of all possible ORs allowed by the model $\mathcal{M}$. Similarly, for each $w \in \mathcal{W}$, we denote by $\bar{G}_0(w)$ the true PS evaluated at $W = w$, by $\bar{G}_n(w)$ an estimate of $\bar{G}_0(w)$, and by $\bar{\mathcal{G}}$ the model for the PS implied by $\mathcal{M}$. Finally, for each $w \in \mathcal{W}$, we denote by $Q_{0,W}(w) := \mbox{pr}_{P_0}(W \le w)$ the distribution function of the vector of confounders. In the remainder, we use the empirical distribution $Q_{n,W}(w) := n^{-1}\sum_{i=1}^n \ind(W_i \le w)$, where $\ind$ is the indicator function, as estimate of $Q_{0,W}$. We denote by $\mathcal{Q}_W$ the model for $Q_{0,W}$ implied by $\mathcal{M}$. It is useful to our discussion to regard the parameter of interest as a mapping $\Psi$ from $\mathcal{Q} := \mathcal{\bar{Q}} \times \mathcal{Q}_W$ to $[0,1]$. That is, given a $Q := (\bar{Q}, Q_W) \in \mathcal{Q}$, $\Psi(Q) := \int \bar{Q}(u) dQ_W(u)$ is the value of the treatment-specific mean implied by the OR $\bar{Q}$ and confounder distribution $Q_W$. Thus, denoting by $Q_0 := (\bar{Q}_0, Q_{0,W})$ the true values of these quantities, we have $\psi_0 = \Psi(Q_0)$. 

We remind readers that a regular (see Appendix A) estimator $\psi_n$ of $\psi_0$ is \emph{asymptotically linear} if and only if $\psi_n - \psi_0$ behaves approximately as an empirical mean of a mean-zero, finite-variance function of the observed data. This function is referred to as the estimator's \emph{influence function}. Depending on the chosen model, there may be a large class of possible influence functions of regular estimators. The influence function in this class that has the smallest variance is referred to as the \emph{efficient influence function} (EIF). Any estimator with influence function equal to the EIF is said to be \emph{efficient} amongst regular, asymptotically linear estimators. Given $Q \in \mathcal{Q}$, $\bar{G} \in \mathcal{\bar{G}}$, and a typical observation $o$, we define the efficient influence function for the treatment-specific mean relative to $\mathcal{M}$, \[
	D^*(o \mid \bar{Q}, Q_W, \bar{G}) := \frac{a}{\bar{G}(w)} \left[y - \bar{Q}(w)\right] + \bar{Q}(w) - \int \bar{Q}(u) dQ_W(u) \ . 
\]

Several frameworks exist for constructing estimators with a user-specified influence function. By selecting the EIF, these frameworks can be used to generate efficient estimators. We focus on the one-step and targeted minimum loss estimation frameworks. For our discussion, it is useful to introduce the idea of a \emph{plug-in estimator}. We denote by $Q_n := (\bar{Q}_n, Q_{n,W})$ an estimate of $Q_0 := (\bar{Q}_0, Q_{0,W})$. A plug-in estimate has the form $\Psi(Q_n) = \int \bar{Q}_n(u) dQ_{n,W}(u) = n^{-1} \sum_{i=1}^n \bar{Q}_n(W_i)$. The one-step estimator $\psi_n^+$ applies an EIF-based correction to the plug-in estimate, \[
  \psi_{n}^+ := \Psi(Q_n) + \frac{1}{n}\sum_{i=1}^n D^*(O_i \mid \bar{Q}_n, Q_{n,W}, \bar{G}_n) \ . 
\]

TMLE is a general framework for constructing plug-in estimators that satisfy, possibly several, user-specified equations. The framework is implemented in two steps. First, initial estimators of relevant nuisance parameters (e.g., the OR and PS) are generated using a user-chosen technique. Subsequently, the initial estimates are carefully modified such that (i) the modified estimators inherit desirable properties of the initial estimators (e.g., their rate of convergence); and (ii) relevant, user-specified equations are satisfied. For the present problem, a TMLE can be implemented by first generating an initial estimate $\bar{Q}_n$ of the OR and $\bar{G}_n$ of the PS. The OR regression estimator is subsequently updated to a \emph{targeted} estimator $\bar{Q}_n^*$, such that the EIF estimating equation, $n^{-1} \sum_{i=1}^n D(O_i \mid \bar{Q}_n^*, Q_{n,W}, \bar{G}_n) = 0$, is satisfied. This can be achieved, for example, by defining a logistic regression working model for the OR with $\mbox{logit}(\bar{Q}_n)$ as an offset, no intercept term, and a single covariate $H_n$. For each $a \in \{0,1\}$ and $w \in \mathcal{W}$, we define this covariate as $H_n(a,w) := a/\bar{G}_n(w)$. The maximum likelihood estimator (MLE) $\epsilon_n$ of the regression coefficient $\epsilon$ associated with the covariate $H_n$ is estimated (e.g., via iteratively re-weighted least squares). For each $w \in \mathcal{W}$, we define the so-called \emph{targeted} OR estimator, $\bar{Q}_n^*(w) = \mbox{expit}\{\mbox{logit}[\bar{Q}_n(w)] + \epsilon_n H_n(1,w) \}$. It is straightforward to show that the score of the coefficient $\epsilon$ at $\epsilon = 0$ evaluated at a typical observation $o$, equals $D(o \mid \bar{Q}_n^*, \bar{G}_n, Q_n)$; thus, we may deduce that the EIF estimating equation is satisfied by the updated OR estimate $\bar{Q}_n^*$. The TMLE $\psi_n^*$ of the treatment-specific mean $\psi_0$ is computed as the plug-in estimator based on the modified OR estimator, $\psi_n^* = \int \bar{Q}_n^*(u) dQ_{n,W}(u) = n^{-1}\sum_{i=1}^n \bar{Q}_n^*(W_i)$. 

We remark that both the one-step and TMLE frameworks can be seen as first generating an initial estimate based on the OR and subsequently applying a correction procedure that involves an estimate of the PS. This view of the estimators is useful to our discussion below. 

\subsection{Large-sample theory and small-sample considerations}

We hence focus discussion on TMLE, with the understanding that similar arguments apply to the one-step estimator. The following equation (derived with assumptions in Appendix B) is useful for studying large-sample behavior of a TMLE, \begin{equation} \label{linearization}
	\Psi(Q_n^*) - \Psi(Q_0) = \frac{1}{n}\sum\limits_{i=1}^n D(O_i \mid \bar{Q}^*, Q_{0,W}, \bar{G}) + R_2(\bar{Q}_n, \bar{Q}_0, \bar{G}_n, \bar{G}_0, Q_{0,W}) + o_{\text{p}}(n^{-1/2}) \ , 
\end{equation}
where $(\bar{Q}^*, \bar{G})$ is the in-probability limit of $(\bar{Q}_n^*, \bar{G}_n)$. The term $R_2$ is a second-order remainder that involves a difference between $(\bar{Q}_n^*,\bar{G}_n)$ and $(\bar{Q}_0,\bar{G}_0)$, \begin{equation} \label{remainder}
	R_2(\bar{Q}_n, \bar{Q}_0, \bar{G}_n, \bar{G}_0, Q_{0,W}) = \int \left[\frac{\bar{G}_n(u) - \bar{G}_0(u)}{\bar{G}_n(u)} \right] [\bar{Q}_n^*(u) - \bar{Q}_0(u)] dQ_{0,W}(u) \ . 
\end{equation}
A key step in establishing asymptotic linearity of a TMLE is showing that $R_2$ is asymptotically negligible, i.e., $R_2(\bar{Q}_n, \bar{Q}_0, \bar{G}_n, \bar{G}_0, Q_{0,W}) = o_{\text{p}}(n^{-1/2})$. This requirement is satisfied if, for example, both $\bar{Q}_{n}^* - \bar{Q}_{0}$ and $\bar{G}_{n} - \bar{G}_{0}$ are $o_{\text{p}}(n^{-1/4})$ with respect to the $L^2(P_0)$ norm. If so, equation (\ref{linearization}) implies that the TMLE is asymptotically linear with influence function equal to the EIF and thus, by definition, the TMLE is efficient. Moreover, the central limit theorem implies that $n^{1/2}(\psi_n^* - \psi_0)$ converges in distribution to a mean-zero normal variate with variance $\sigma^2_0 := \E_{P_0}[ D(O \mid \bar{Q}_0, Q_{0,W}, \bar{G}_0)^2 ]$. 

We note an additional interesting feature of efficient estimators of the treatment-specific mean: they are doubly-robust. That is, the estimated treatment-specific mean is consistent for the true treatment-specific mean if either the estimated OR consistently estimates the true OR, the estimated PS consistently estimates the true PS, or both estimators are consistent. 

As with any asymptotic analysis, these results provide conditions under which estimators are well-behaved in large samples, but provide no guarantees of small-sample performance. In particular, in settings where the target estimand is weakly identifiable, in spite of the their asymptotic optimality, the TMLE and one-step may be unstable. For example, when the PS may assume very small values, the variance of the EIF may be large, which can cause erratic behavior of the EIF-based correction procedures. In the context of TMLE for the treatment-specific mean, this instability may manifest in the estimation the working model parameter $\epsilon$. The covariate $H_n$ in the parametric working model may have extremely large values, leading to a targeted OR estimator $\bar{Q}_n^*$ whose performance is considerably deteriorated relative to the initial OR estimator $\bar{Q}_n$. 

Often, the analyst has little prior information that would suggest whether or not such issues are present. Thus, we are motivated to consider automated procedures for constructing OR and PS estimators that are adaptive to near positivity violations. One such proposal is CTMLE. A CTMLE is based on a sequence of PS estimators that increase in complexity. For example, in the context of estimating the treatment-specific mean, we may start our sequence with an intercept-only logistic regression model and build a sequence ranging from that simple estimator to a flexible semiparametric estimator such as kernel regression. The sequence of candidate PS estimators of is used to generate a sequence of targeted ORs. The best of the targeted ORs is selected via cross-validation and is used to create a plug-in estimator. The principle underlying CTMLE is that the estimator searches for a reduced-dimension alternative to the true PS that is adaptive to how well the estimated OR fits the true OR. If the initial OR estimate provides a good fit, then there may be little benefit (or even detriment) to performing a TMLE correction based on a PS with extremely small values. On the other hand if the initial OR is a poor fit, then we may in fact benefit from such a correction. CTMLE can adapt to each of these situations. Because CTMLE is based on a sequence of increasingly nonparametric PS estimators, the procedure will, with probability tending to 1, select the last consistent estimator of the true PS in the sequence. Thus, in large samples, CTMLE is expected to have similar behavior to a standard TMLE. In this respect, CTMLE may be generally viewed as a procedure that offers finite-sample improvements over standard TMLE, while maintaining its asymptotic efficiency. 

The above discussion of efficiency and finite-sample considerations can be viewed in a more general lens than the context of nonparametric estimation of the ATE. In particular, these issues apply to a more broad set of problems that involve observed data structures that can be represented as a coarsened at random (CAR) version of a full data structure, while many models and parameters relevant to causal inference are special cases of this setting. The general CAR setting is discussed further in Appendix C. 


\section{Methods} \label{sec:methods}

We now propose a particular CTMLE for the treatment-specific mean that is robust to near positivity violations, but avoids the sequential PS estimator selection that is typical of other CTMLE proposals. The distinct aspect of the current proposal relative to previous CTMLE-based proposals is that we rely fully on $\bar{Q}_n$ converging to $\bar{Q}_0$ faster than $n^{-1/4}$ with respect to $L^2(P_0)$-norm. Because the OR estimator is consistent, any PS estimator will lead to a consistent estimate of the treatment-specific mean, due to the double-robustness of the EIF. However, our procedure asks for a more stringent property on the adaptive PS estimator. We require that the resulting CTMLE be asymptotically linear, and thereby maintain a Normal limiting distribution. The challenge in so-doing is that our selected PS estimator is generally inconsistent for the true PS. Previous work has shown that, even when a nonparametric estimate of the OR is consistent, inconsistent estimation of the PS can have serious implications for the behavior of one-step estimators and TMLEs \citep{vanderLaan14a,Benkeseretal17}. The issue stems from the fact that the second-order remainder is generally not asymptotically negligible. Thus, the key in achieving our goal is to choose an adaptive PS estimator such that the second-order remainder remains asymptotically negligible under reasonable conditions. In Theorem 1, we establish that this goal can be achieved by using an estimate of \[
	\bar{G}_0(w \mid \bar{Q}_0) := \mbox{pr}_{P_0}[A = 1 \mid \bar{Q}_0(W) = \bar{Q}_0(w)] 
\]
rather than an estimate of the true PS. In words, $\bar{G}_0(\cdot \mid \bar{Q}_0)$ describes the probability of receiving treatment as a function of the conditional mean outcome under treatment. This adaptive PS estimate is substituted into the usual TMLE (or one-step) procedures for estimation and inference. 

Our proposed CTMLE is implemented in the following steps: \begin{enumerate}
	\item {\it estimate OR}: regress $Y$ on $W$ amongst units observed to receive treatment $A = 1$ to obtain OR estimate $\bar{Q}_n$; 
	\item {\it predict outcome}: use estimated OR to obtain a prediction $\bar{Q}_n(W_i)$ for each observed data unit, $i = 1,\dots,n$;
	\item {\it estimate adaptive PS}: regress $A$ on predictions $\bar{Q}_n(W_i)$ to obtain adaptive PS estimate $\bar{G}_n(\cdot \mid \bar{Q}_n)$; 
	\item {\it predict PS}: use estimated PS to obtain prediction $\bar{G}_n(W_i \mid \bar{Q}_n)$ for each observed data unit, $i = 1, \dots, n$; 
	\item {\it fit OR working model}: fit logistic regression of outcome $Y$ on covariate $H_n(A, W) := A / \bar{G}_n(W \mid \bar{Q}_n)$ with offset $\mbox{logit}[\bar{Q}_n(W)]$; denote by $\epsilon_n^\#$ the estimated coefficient; 
	\item {\it target OR estimate}: use OR working model to obtain a prediction $\bar{Q}_n^\#(W_i) = \mbox{expit}\{\mbox{logit}[\bar{Q}_n(W_i)] + \epsilon_n^\# H(1, W_i)\}$ for each observed data unit, $i = 1,\dots,n$; 
	\item {\it compute plug-in estimate}: the CTMLE is $\psi_n^\# := n^{-1} \sum_{i=1}^n \bar{Q}_n^\#(W_i)$. 
\end{enumerate}
An analogous description of the collaborative one-step estimator is included in Appendix E. Sample \texttt{R} code to compute the estimators is included in Appendix F. 

The following theorem establishes the weak convergence of the proposed estimator. We explicitly discuss regularity conditions in Appendix G and strategies for weakening these conditions in Appendix F.
\begin{theorem} \label{thm1}
Under the regularity conditions in Appendix G, \[
	\psi_n^\# - \psi_0 = \frac{1}{n}\sum_{i=1}^n D^*(O_i \mid \bar{Q}_0, Q_{0,W}, \bar{G}_0(\cdot \mid \bar{Q}_0)) + o_{\text{p}}(n^{-1/2}) \ ,
\]
and $n^{1/2}(\psi_n^\# - \psi_0)$ converges in distribution to a mean-zero Normal variate with variance $\tau^2_0 := \E_{P_0}[D^*(O \mid \bar{Q}_0, Q_{0,W}, \bar{G}_0(\cdot \mid \bar{Q}_0))^2]$. 
\end{theorem}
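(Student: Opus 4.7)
The plan is to mirror the standard TMLE linearization (\ref{linearization}), with the novelty concentrated in the second-order remainder. Step 6 of the algorithm defines $\bar{Q}_n^\#$ as the MLE along the logistic working model, so its score equation yields $P_n D^*(\cdot \mid \bar{Q}_n^\#, Q_{n,W}, \bar{G}_n(\cdot\mid\bar{Q}_n)) = 0$. Paralleling the derivation of (\ref{linearization}) with the adaptive PS in place of a consistent PS estimator gives the exact identity
$$\psi_n^\# - \psi_0 = (P_n - P_0)\,D^*(\cdot \mid \bar{Q}_n^\#, Q_{n,W}, \bar{G}_n(\cdot\mid\bar{Q}_n)) + R_2^\#,$$
where $R_2^\#$ has the form (\ref{remainder}) with $\bar{G}_n$ replaced by $\bar{G}_n(\cdot\mid\bar{Q}_n)$. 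Under a Donsker assumption together with $L^2(P_0)$-consistency of the nuisances to $\bar{Q}_0$ and to $\bar{G}^* := \bar{G}_0(\cdot \mid \bar{Q}_0)$, standard empirical process arguments replace the linear term by $(P_n - P_0)\,D^*(\cdot \mid \bar{Q}_0, Q_{0,W}, \bar{G}^*) + o_p(n^{-1/2})$, after which the CLT delivers the stated Gaussian limit --- provided one shows $R_2^\# = o_p(n^{-1/2})$.

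Showing this remainder bound is the main obstacle, because $\bar{G}_n(\cdot\mid\bar{Q}_n)$ is intentionally inconsistent for $\bar{G}_0$. I would decompose
$$\bar{G}_n(\cdot\mid\bar{Q}_n) - \bar{G}_0 \;=\; [\bar{G}_n(\cdot\mid\bar{Q}_n) - \bar{G}^*] \;+\; [\bar{G}^* - \bar{G}_0]$$
and treat each bracket separately. The first piece has $L^2$-norm $o_p(n^{-1/4})$, inherited via Lipschitz continuity of the adaptive PS map from the $o_p(n^{-1/4})$ rates on $\bar{Q}_n - \bar{Q}_0$ and on the one-dimensional regression of $A$ on $\bar{Q}_n(W)$; Cauchy--Schwarz against $\bar{Q}_n^\# - \bar{Q}_0 = o_p(n^{-1/4})$ then contributes $o_p(n^{-1/2})$.

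The second piece is the genuinely new ingredient, handled by the orthogonality identity
$$\E_{P_0}\!\left[(\bar{G}^*(W) - \bar{G}_0(W))\,h(\bar{Q}_0(W))\right] = 0 \text{ for every measurable } h,$$
which holds by definition of $\bar{G}^*$ as the conditional mean of $A$ given $\bar{Q}_0(W)$. The key structural fact that makes this identity bite is that, by construction, both $\bar{G}_n(W\mid\bar{Q}_n)$ and the updating covariate $H_n(1,W)$ depend on $W$ only through $\bar{Q}_n(W)$, so the targeted $\bar{Q}_n^\#(W)$ does as well. Writing $[\bar{Q}_n^\#(u) - \bar{Q}_0(u)]/\bar{G}_n(u\mid\bar{Q}_n) = \alpha_n(\bar{Q}_n(u)) - \bar{Q}_0(u)\,\beta_n(\bar{Q}_n(u))$ for explicit $\alpha_n,\beta_n$, the integrals of $\alpha_n(\bar{Q}_0)$ and $\bar{Q}_0\,\beta_n(\bar{Q}_0)$ against $\bar{G}^* - \bar{G}_0$ vanish by orthogonality, reducing this piece to a sum of residuals of the form $\int(\bar{G}^* - \bar{G}_0)[\alpha_n(\bar{Q}_n) - \alpha_n(\bar{Q}_0)]\,dQ_{0,W}$. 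These are then bounded using Lipschitz continuity of $\alpha_n,\beta_n$ combined with the $o_p(n^{-1/4})$ rate on $\bar{Q}_n - \bar{Q}_0$ and the $O_p(n^{-1/2})$ size of the targeting coefficient $\epsilon_n^\#$ (which follows because the limit score $\E_{P_0} D^*(\cdot \mid \bar{Q}_0, Q_{0,W}, \bar{G}^*) = 0$ by a direct calculation using the tower property with the conditioning set defining $\bar{G}^*$). Making this final rate bookkeeping precise under the regularity conditions of Appendix G is the hardest technical step.
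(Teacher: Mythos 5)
Your overall architecture (exact linearization, empirical-process step, CLT) matches the paper's, and your observation that the score equation gives $P_n D^*(\cdot \mid \bar{Q}_n^\#, Q_{n,W}, \bar{G}_n(\cdot\mid\bar{Q}_n)) = 0$ exactly is correct. The gap is in your treatment of the second bracket $\bar{G}^* - \bar{G}_0$ of the remainder. The orthogonality identity does kill the terms involving $\alpha_n(\bar{Q}_0)$ and $\bar{Q}_0\,\beta_n(\bar{Q}_0)$, but the residual it leaves, e.g.\ $\int [\bar{G}^*(u) - \bar{G}_0(u)][\alpha_n(\bar{Q}_n(u)) - \alpha_n(\bar{Q}_0(u))]\,dQ_{0,W}(u)$, pairs a \emph{non-vanishing} function ($\bar{G}^* - \bar{G}_0 = \bar{G}_0(\cdot\mid\bar{Q}_0) - \bar{G}_0$ has $L^2(P_0)$-norm of constant order) with only a single $o_{\text{p}}(n^{-1/4})$ factor. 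Lipschitz continuity of $\alpha_n,\beta_n$ plus Cauchy--Schwarz therefore delivers only $o_{\text{p}}(n^{-1/4})$, not $o_{\text{p}}(n^{-1/2})$; and the $O_{\text{p}}(n^{-1/2})$ size of $\epsilon_n^\#$ cannot rescue this, since in the case $\epsilon_n^\#=0$ your residual reduces to $\int(\bar{G}^*-\bar{G}_0)(\bar{Q}_n-\bar{Q}_0)/\bar{G}_n(\cdot\mid\bar{Q}_n)\,dQ_{0,W}$, which contains no $\epsilon$ at all.

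The missing ingredient is the paper's Lemma 1. The paper never anchors at $\bar{G}_0(\cdot\mid\bar{Q}_0)$; instead it uses the tower property to replace $\bar{G}_0$ exactly by $\bar{G}_0(\cdot\mid\bar{Q}_n,\bar{Q}_0) := \pr_{P_0}[A=1\mid\bar{Q}_n(W),\bar{Q}_0(W)]$ (legitimate because the remaining integrand depends on $W$ only through $(\bar{Q}_n(W),\bar{Q}_0(W))$, once the random denominator has been swapped for $\bar{G}_0(\cdot\mid\bar{Q}_n)$ at the cost of a genuinely second-order term $R_{21,n}$), and then compares this to $\bar{G}_0(\cdot\mid\bar{Q}_n)$, the limit of the adaptive PS estimator. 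Lemma 1 --- a projection-plus-Taylor argument resting on the smoothness condition (v) --- shows $\| \bar{G}_0(\cdot\mid\bar{Q}_n,\bar{Q}_0) - \bar{G}_0(\cdot\mid\bar{Q}_n)\|_{P_0} \le C\|\bar{Q}_n - \bar{Q}_0\|_{P_0} = o_{\text{p}}(n^{-1/4})$, so the critical cross term is a product of two shrinking factors. Your argument can be repaired in the same spirit: apply the tower property inside your residual and invoke a Lemma-1-type bound on $\|\bar{G}_0(\cdot\mid\bar{Q}_0) - \bar{G}_0(\cdot\mid\bar{Q}_n,\bar{Q}_0)\|_{P_0}$; but without such a quantitative projection lemma --- which is the real technical heart of the proof and which your proposal neither supplies nor cites --- the remainder bound does not close. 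A secondary issue: your claim that $\|\bar{G}_n(\cdot\mid\bar{Q}_n) - \bar{G}^*\|_{P_0} = o_{\text{p}}(n^{-1/4})$ also needs this same lemma, since condition (iii) only controls the distance to $\bar{G}_0(\cdot\mid\bar{Q}_n)$, not to $\bar{G}_0(\cdot\mid\bar{Q}_0)$.
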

The asymptotic variance $\tau^2_0$ of $\psi_n^\#$ is generally smaller than that of the standard TMLE $\sigma^2_0$, so that the proposed estimator is super efficient. That is, at any fixed data distribution in ${\cal M}$, this CTMLE will be asymptotically more efficient than the standard TMLE. 

\subsection{Variance estimation}
We propose to estimate the standard error of $\psi_n^\#$, based on 
a cross-validated estimate of the variance of the influence function. Specifically, consider a $V$-fold cross-validation scheme, wherein data are randomly partitioned into $V$ blocks of approximately equal size. For $v=1,\dots,V$, denote by $\mathcal{V}_v \subset \{1,\dots,n\}$ the indices of units in each block. For $i = 1,\dots,n$, we denote by $\bar{Q}_{n,v}^0(W_i)$ the predicted outcome for observation $i$ based on an OR estimate fit when observation $i$ was in the hold-out block. That is, to obtain $\bar{Q}_{n,v}^0$, we regress $Y_i$ on $W_i$ in units with $A_i = 1$ and $i \in \{1,\dots,n\} \setminus \mathcal{V}_v$. Then we use this fitted regression to obtain predictions based on $W_j, j \in \mathcal{V}_v$. Similarly, we denote by $\bar{G}_{n,v}^0(\cdot \mid \bar{Q}_{n,v}^0)$ the $v$-th estimated adaptive PS, $v=1,\dots,V$. This quantity is computed by regressing $A_i$ on $\bar{Q}_{n,v}^0(W_i)$ for $i \in \{1,\dots,n\} \setminus \mathcal{V}_v$. This regression is then used to obtain predictions based on $\bar{Q}_{n,v}^0(W_j), j \in \mathcal{V}_v$, which we denote by $\bar{G}_{n,v}^0(W_j \mid \bar{Q}_{n,v}^0)$. Finally, we denote by $Q_{n,W}^1$ the empirical distribution of $W$ based on $\{O_i : i \in \mathcal{V}_v\}, v = 1,\dots,V$. The cross-validated variance estimator is \begin{align*}
	\tau_{n,\text{cv}}^2 &:= \frac{1}{V} \sum_{v = 1}^V \bigg\{ \frac{1}{|\mathcal{V}_v|} \sum_{i \in \mathcal{V}_v} \biggl[ D^*(O_i \mid \bar{Q}_{n,v}^0, Q_{n,W}^1, \bar{G}_{n,v}(\cdot \mid \bar{Q}_{n,v}^0)) \\ 
	&\hspace{2in} - \frac{1}{|\mathcal{V}_v|}\sum_{j \in \mathcal{V}_v}  D^*(O_i \mid \bar{Q}_{n,v}^0, Q_{n,W}^1, \bar{G}_{n,v}(\cdot \mid \bar{Q}_{n,v}^0)) \biggr]^2 \bigg\} \ . 
\end{align*}

\noindent \textbf{Remark. } For notational simplicity, we have focused on presenting an estimate of the treatment-specific mean. An estimate of the ATE can thus be obtained by repeating the entire procedure but switching the labeling of the treatment. Alternatively, we also propose a CTMLE that directly targets the ATE in Appendix F. The major difference is that the adaptive PS now involves regression $A$ on both the OR in units who received treated, as well as the OR in units who did not receive treatment. 

\section{Simulations}

We evaluated the performance of the proposed collaborative estimators relative to their standard counterparts in two simulation studies. We focus our presentation on comparing CTMLE and TMLE results, while a comparison of the one-step estimators is included in Appendix G. The first simulation evaluated the relative performance of CTMLE vs. TMLE as a function of sample size and strength of positivity violations. In this setting the estimators of the OR and PS are based on correctly-specified parametric models. The results demonstrate the behavior the proposed estimators as a function both of sample size and strength of positivity violations. The second simulation offers a competitive setting for comparing the various estimators. In this setting, both the OR and PS are highly nonlinear functions of the covariates and involve complex covariate interactions. To consistently estimate these complex functions, we utilize the highly adaptive loss minimum loss estimator (HAL-MLE) \citep{vanderLaan15,Benkeser&vanderLaan16}. This estimator has been shown to the requisite regularity conditions of Theorem \ref{thm1} under extremely mild assumptions on the true nuisance parameters.  

In both simulation settings, we generated and analyzed 1000 Monte Carlo replicate data sets in each setting at each sample size. We evaluated the estimators on their bias, variance, and mean-squared error. We also present visualizations of the estimated sampling distributions of the scaled and centered sampling distributions. Further, we present the coverage probability of nominal 95\% Wald-style confidence intervals based on the Monte Carlo standard deviation of the estimators (i.e., an oracle confidence interval) and based on influence function-based standard error estimates. 

\subsection{Simulation 1}

\begin{figure}[ht]
\centering
\includegraphics[width = \textwidth]{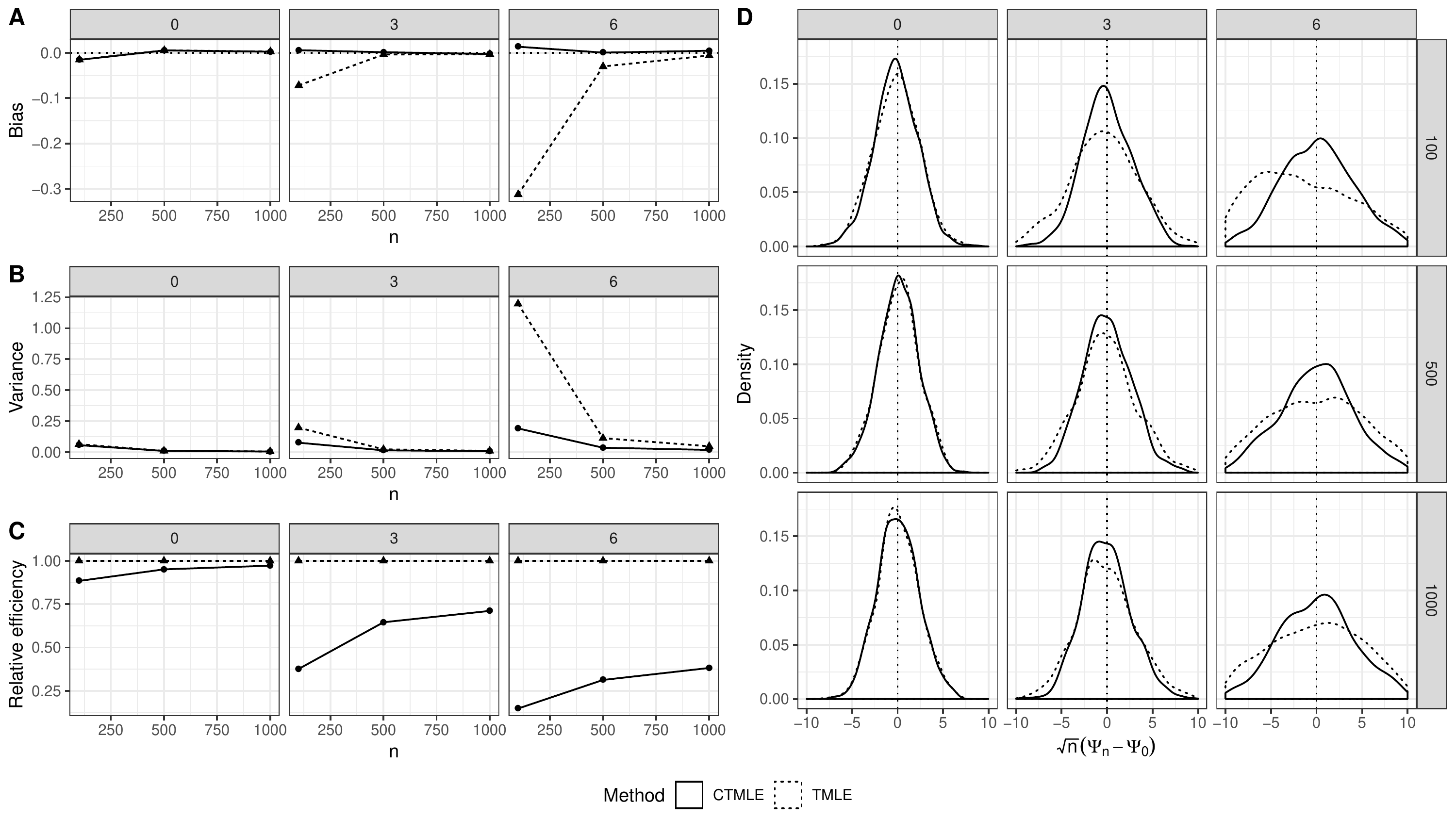}
\caption{Results for simulation 1 comparing CTMLE and TMLE. Each panel displays a different performance metric and each sub-panel displays results for $\gamma \in \{0,3, 6\}$, representing, respectively, settings with no positivity, moderate positivity, and extreme positivity violations. Panel A: Bias of the estimators. Panel B: Variance of the estimators. Panel C: Relative efficiency (defined as ratio of mean squared-error) of CTMLE to TMLE. Numbers below one indicate greater efficiency of CTMLE. Panel D: Kernel density estimates of sampling distributions using a Gaussian kernel and Silverman's rule of thumb bandwidth \citep{silverman1986density}.}
\label{tmle_estimator_results_sim1}
\end{figure}

For each sample size $n \in \{100, 500, 1000\}$ we generated data as follows. $W$ was an eight-variate vector. We drew the first seven components $(W_1,\dots,W_7)$ of $W$ from a from an Uniform distribution on $[-1.5, 1.5]^7$. The final component $W_8$ of $W$ was drawn from a Bernoulli(0.5) distribution. Given $W = w$, the treatment $A$ was drawn from a Bernoulli distribution with success probability \[
  \bar{G}_0(w) = \mbox{expit}\left(0.5 \gamma - \gamma w_8 + \sum_{j=1}^7 2^{1-j} w_j \right) \ . 
\]
Given $W=w$ and $A=a$, the outcome $Y$ was drawn from a Normal distribution with unit variance and mean \[
  \bar{Q}_0(a,w) = a - \sum_{j=1}^7 2^{1-j} w_j \ . 
\]
The true ATE in this setting is one. We induced positivity violations by choosing increasingly large values of $\gamma$. We evaluated three choices, $\gamma \in \{0, 3, 6\}$. These choices resulted in PS bounded in (0.05, 0.95), (0.01, 0.99), and (0.003, 0.997), respectively. The standard TMLE and one-step estimators used correctly-specified logistic regression for the PS and correctly-specified linear regression for the OR. The CTMLE and collaborative one-step used correctly-specified linear regression for the OR and HAL-MLE for the adaptive PS.

In settings with no positivity issues ($\gamma = 0$), we found that CTMLE and TMLE performed approximately equivalently, though CTMLE offered modest benefits at the smallest sample size (Figure \ref{tmle_estimator_results_sim1}). As $\gamma$ increased, propensity scores were pushed towards zero and one, and we saw increased performance of CTMLE relative to TMLE. CTMLE offered significant improvements both in terms of bias and variance, and was more than four times as efficient in the $\gamma = 6$, $n = 100$ scenario. The sampling distribution of both estimators was well approximated by the reference asymptotic distribution as indicated by nominal coverage of oracle confidence intervals (Figure \ref{tmle_coverage_results_sim1}, Panel A). However, while the estimated standard errors of the TMLE estimator performed well in larger samples, the estimated standard errors of CTMLE had poor performance, often underestimating the true variability of the estimator (Figure \ref{tmle_coverage_results_sim1}, Panel B). 

Results for the collaborative one-step vs. standard one-step estimator were essentially the same as for TMLE (Appendix XXX). 

\begin{figure}[ht]
\centering
\includegraphics[width = 0.5\textwidth]{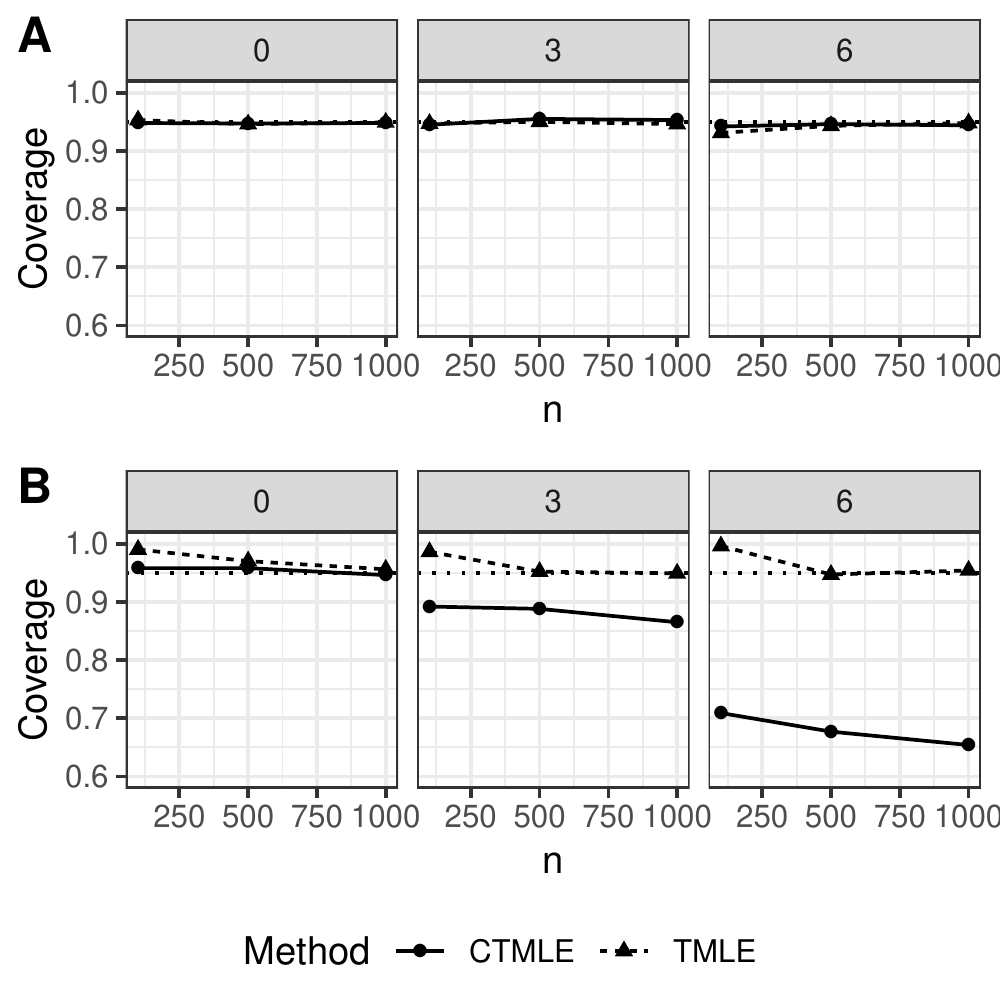}
\caption{Results for simulation 1 comparing confidence intervals for CTMLE and TMLE. Each panel displays the coverage as a function of sample size and each sub-panel displays results for $\gamma \in \{0,3, 6\}$. Panel A: Coverage probability of nominal 95\% oracle confidence intervals. Panel B: Coverage probability of nominal 95\% confidence intervals based on estimated standard errors.}
\label{tmle_coverage_results_sim1}
\end{figure}

\subsection{Simulation 2}

\begin{figure}[ht]
\centering
\includegraphics[width = 0.8\textwidth]{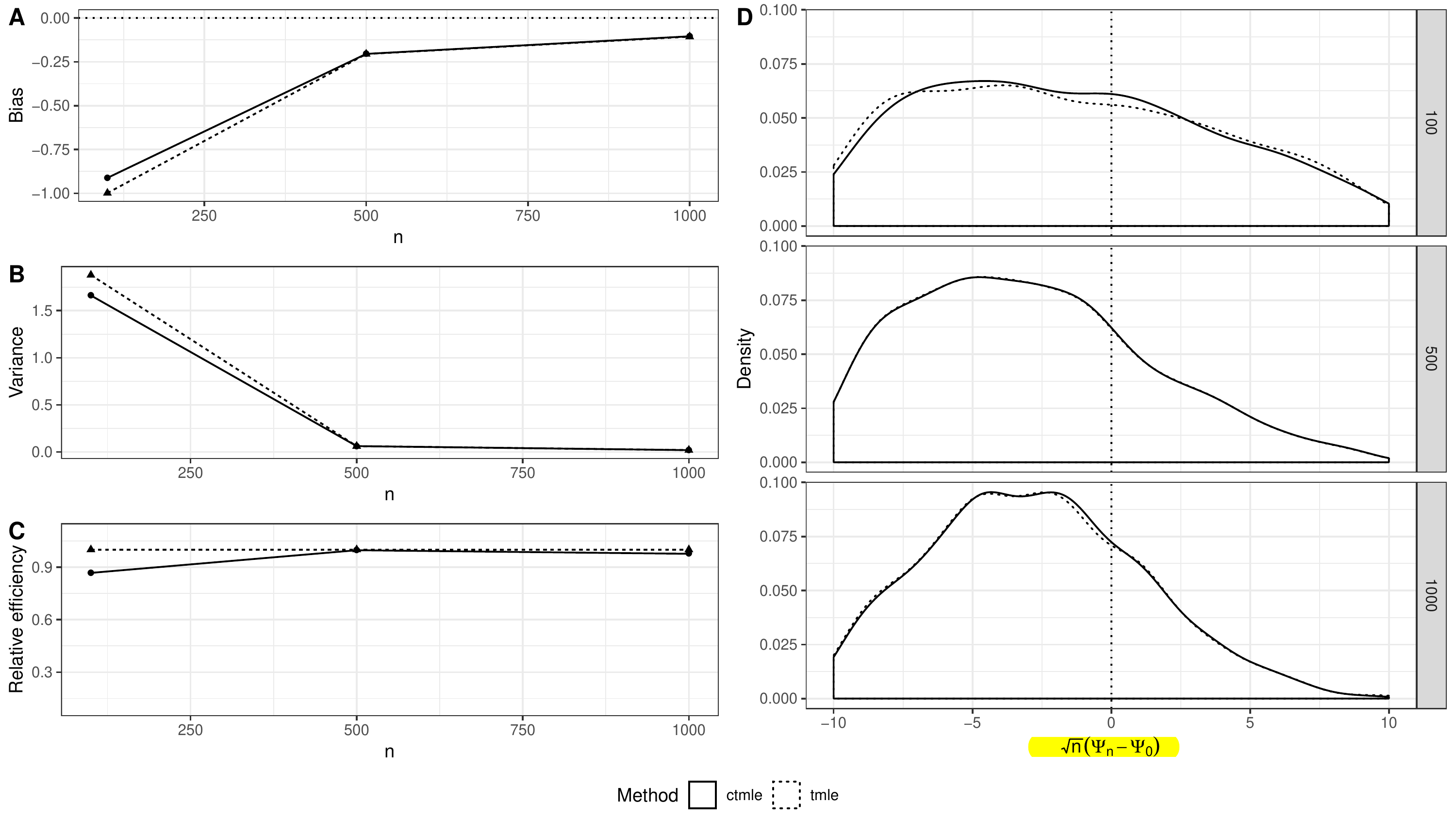}
\caption{Results for simulation 2 comparing CTMLE and TMLE. Each panel displays a different performance metric and each sub-panel displays results for $\gamma \in \{0,3, 6\}$, representing, respectively, settings with no positivity, moderate positivity, and extreme positivity violations. Panel A: Bias of the estimators. Panel B: Variance of the estimators. Panel C: Relative efficiency (defined as ratio of mean squared-error) of CTMLE to TMLE. Numbers below one indicate greater efficiency of CTMLE. Panel D: Kernel density estimates of sampling distributions using a Gaussian kernel and Silverman's rule of thumb bandwidth \citep{silverman1986density}.}
\label{tmle_estimator_results_sim2}
\end{figure}

We based this simulation setting on the oft-cited \citet{Kang:Schafer07} simulation design. This design is notoriously challenging for causal effect estimators due to extremely non-linear covariate relationships in the OR and PS and highly complex interactions between covariates. In our simulation, we drew $Z_1$ from a Uniform(0.5, 2) distribution and drew $Z_2,\dots,Z_5$ from a Uniform distribution on $[-2,2]^4$. Given $Z = z$, the treatment $A$ was drawn from a Bernoulli distribution with success probability \[
  \bar{G}_0(z) = \mbox{expit}(- Z_1 + 0.5  Z_2 - Z_3 - 0.1  Z_4 + Z_5 + 0.75Z_5^2) \ . 
\]
Given $Z = z$ and $A = a$, the outcome $Y$ was drawn from a Normal distribution with unit variance and mean \[
  \bar{Q}_0(a,z) = 210 + 27.4  Z_1 + 13.7  Z_2 + 13.7  Z_3 + 13.7  Z_4 \ . 
\]
The true ATE in this setting is zero and the true PS is bounded between (0.004, 0.999). A challenge of this simulation setting is that the covariates $Z$ are not available to the analyst. Instead, we must base our estimation on $W = W_1, \dots, W_5$, which is generated from $Z$ as follows \begin{align*}
  W_1 &= \mbox{exp}(Z_1/2) & W_2 &= Z_2/[1 + \mbox{exp}(Z_1)] + 10 \\
  W_3 &= [(Z_1 Z_3)/25 + 0.6]^3 & W_4 &= (Z_2 + Z_4 + 20)^2 \\
  W_5 &= Z_5 \ . 
\end{align*}
As such, the true PS and OR, when expressed as functions of $W$ are highly non-linear and involve interactions between the various components of $Z$. To estimate these functions well, we require extremely flexible regression tools. Thus, we use HAL-MLE for the PS and OR used by the TMLE and one-step estimators. Similarly, we used HAL-MLE for the OR and adaptive PS. 

\begin{figure}[ht]
\centering
\includegraphics[width = 0.4\textwidth]{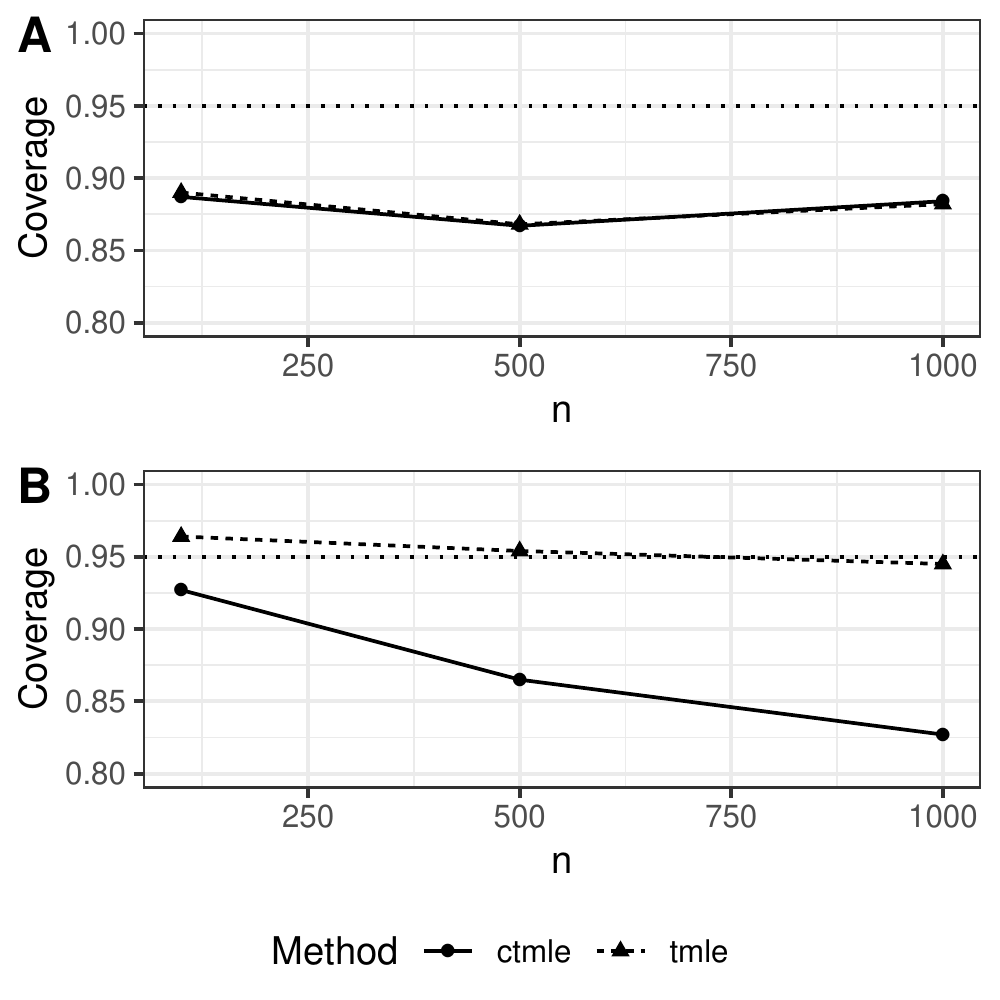}
\caption{Results for simulation 1 comparing confidence intervals for CTMLE and TMLE. Each panel displays the coverage as a function of sample size and each sub-panel displays results for $\gamma \in \{0,3, 6\}$. Panel A: Coverage probability of nominal 95\% oracle confidence intervals. Panel B: Coverage probability of nominal 95\% confidence intervals based on estimated standard errors.}
\label{tmle_coverage_results_sim2}
\end{figure}

As expected, both the the TMLE and CTMLE struggled in this very challenging simulation study (Figure \ref{tmle_estimator_results_sim2}). While CTMLE offered modest benefits in terms of variance, the bias of the two estimators was comparable. Nevertheless, we do see evidence of asymptotic linearity of both estimators in that the sampling distributions of both of the scaled and centered estimators appear to be moving towards an appropriate center at zero. Nevertheless, both estimators had relatively large bias, even in the largest sample size $n = 1000$, as shown by the less-than-nominal coverage of the oracle confidence intervals (Figure \ref{tmle_coverage_results_sim2}, Panel A). The cross-validated influence function-based variance estimators overestimated the variability of the TMLE, which resulted in near nominal coverage for those intervals (Panel B). However, as in simulation 1, we found that the proposed variance estimators for the CTMLE significantly underestimated the variability of the estimator, resulting in poor coverage. 

\section{Data Analysis}

We analyzed data collected via the Cebuano Longitudinal Health and Nutrition Survey (CLHNS) \citep{adair2010cohort}. CLHNS is an ongoing study of a cohort of Filipino women who gave birth between May 1, 1983, and April 30, 1984. Children born to these women in that period have been followed through subsequent surveys over multiple years. We used these data to estimate the effect of term pregnancy (pre-, full-, post-) on children's schooling achievement in Cebuano, English, and mathematics. The putative confounders we considered were related to parental characteristics (maternal/paternal age, maternal height, maternal/paternal education, maternal age at first birth, maternal parity, maternal marital status), household characteristics (number of children in the household, child/adult ratio, child dependency ratio, crowding index, number adult males/females in household, urbanicity score, water availability, sanitation), and socioeconomic information (socioeconomic status, familial health care access, total family income). Our strategy for handling missing covariate data is described in Appendix H. Pre-, full-, and post-term pregnancies were defined as those less than 37 completed weeks, between 37 and 41 completed weeks, and greater than 41 completed weeks, respectively. Our outcome is the standardized total performance on the National Elementary Assessment Test (NEAT) that was collected in 1994-95. Children were scored on their achievement in Cebuano, English, and mathematics. We standardized and summed these three scores to create a composite school achievement outcome. We analyzed a total of $n = 2150$ children who completed these tests. 

A-priori, we may not expect positivity issues with these data since previous studies have not uncovered strong predictors of birth term \citep{di2011maternal}. On the other hand, this same research has highlighted significant geographic variability in predictors of birth term. Thus, when proposing a pre-specified analysis in a population with little prior information available on predictors of term birth, we may worry about the potential for near positivity violations. Consequently, we may wish to select a method that hedges against such violations, like the proposed CTMLE. 

We analyzed these data using TMLE, CTMLE, one-step, and collaborative one-step. For the outcome regression, we used a super learner. Super learner, a generalized version of regression stacking \citep{Wolpert92,Breiman96d}, is a cross-validation-based ensemble approach that creates a regression fit based on a weighted combination of candidate regression fits. The method is implemented in the \texttt{SuperLearner} \texttt{R} package, freely available through the Comprehensive R Archive Network (CRAN) \citep{SuperLearnerPackage}. Our candidate regressions for the OR and PS included generalized linear models (function \texttt{SL.glm} in the \texttt{SuperLearner} package), polynomial multivariate regression splines (\texttt{SL.earth}), random forests (\texttt{SL.ranger}), lasso (\texttt{SL.glmnet}), gradient boosted regression trees (\texttt{SL.gbm}), an intercept-only regression (\texttt{SL.mean}), and a forward stepwise generalized linear model (\texttt{SL.step.forward}). Each super learner was based on five-fold cross-validation. We used the HAL-MLE to estimate each adaptive PS utilized by the proposed CTMLE and collaborative one-step. There were three such scores: one each for the OR in pre-, full-, and post-term births. We constructed Wald-style confidence intervals using influence function-based standard errors as described above. We also tested the null hypothesis that the average schooling achievement was equivalent across the three categories of birth term using a two-degree-of-freedom Wald-style test. Finally, for comparison, we estimated each of the relevant quantities using a main-terms linear regression model, with nonparametric bootstrap confidence intervals (based on 500 bootstrap samples) and computed a p-value using a two-degree-of-freedom Wald-style test based on a sandwich variance estimator \citep{white1980heteroskedasticity}. 

\begin{table}[ht]
\centering
\begin{tabular}{lcccc}
  \hline
Method & Pre-term & Full-term & Post-term & p-value \\ 
  \hline
  TMLE & -0.03 (-0.14, 0.07) & 0.02 (-0.03, 0.06) & -0.08 (-0.21, 0.06) & 0.33 \\ 
  CTMLE & -0.03 (-0.13, 0.07) & 0.02 (-0.03, 0.06) & -0.07 (-0.20, 0.06) & 0.37 \\ 
  OS & -0.03 (-0.14, 0.08) & 0.02 (-0.03, 0.06) & -0.07 (-0.21, 0.06) & 0.36 \\ 
  COS & -0.03 (-0.13, 0.07) & 0.02 (-0.03, 0.06) & -0.07 (-0.19, 0.06) & 0.39 \\ 
  LM & -0.03 (-0.13, 0.07) & 0.02 (-0.03, 0.06) & -0.18 (-0.46, 0.08) & 0.27 \\ 
   \hline
\end{tabular}
\caption{Estimated average standardized schooling achievement score (95\% confidence interval) for pre-, full-, and post-term births using various methods. TMLE = targeted minimum loss estimator, CTMLE = the proposed collaborative TMLE, OS = one-step estimator, COS = collaborative OS estimator, LM = estimator based on a main terms linear model. The p-value is from a test of the null hypothesis that the three means are equal.}
\label{cebu_results}
\end{table}

Each of the methods provided similar point estimates, except the estimate of the average standardized score in post-term births was lower when considering the linear model-based estimator (Table \ref{cebu_results}). In each case, we fail to reject the null hypothesis of equal school achievement by birth term category at any reasonable type-one error threshold. That the standard efficient estimators gave similar results to their collaborative counterparts is not surprising considering that no significant positivity issues were uncovered in the analysis. In particular, the estimated probabilities of pre-, full-, and post-term birth for the observed participants were bounded between (0.12, 0.23), (0.68, 0.81), and (0.06, 0.10), respectively. On the other hand, the corresponding estimated adaptive PS's were bounded between (0.14, 0.17), (0.78, 0.78), and (0.07, 0.08), respectively. 

\section{Discussion}

It has been recognized in the literature that efficient estimators such as TMLE and one-step can show erratic, non-robust behavior if the target estimand is weakly identifiable. This setting is most often seen in observational studies with high-dimensional covariates with little a-priori knowledge of which covariates are confounders of the treatment/outcome relationship. In these settings various CTMLE estimators have been proposed. Relative to these existing estimators, our proposed estimator enjoys some important benefits. First, it avoids many of the decision points needed in a typical CTMLE implementation. The user need only choose a regression estimator for the OR and for the adaptive PS. Beyond those choices, standard TMLE software can be used simply by substituting an estimate of the adaptive PS for the true PS. Our theorem establishes similar weak convergence results to those that have been previously proven for TMLE. Another potential benefit of our proposal is that, while the statistical motivation for our choice of adaptive PS is somewhat technical -- ensuring a second-order remainder of the linearized estimator is asymptotically negligible -- the interpretation of the adaptive PS is easily explained to applied practitioners. For example, in a medical context, rather than estimating the true PS, which describes a patient's propensity for receiving treatment as a function their medical history and other confounders, we instead opt for the adaptive PS, which describes the propensity for receiving treatment as a function treated patients' estimated risk of disease. 

We expect that the proposed CTMLE will exhibit more robust finite-sample performance relative to standard TMLE in settings where the target estimand is weakly identifiable, and indeed the results of our first simulation explicitly demonstrated this phenomenon. On the other hand, because the proposed CTMLE is an \emph{irregular estimator}, it may perform poorly for certain data generating distributions. A more careful comparison between standard TMLE and our super-efficient CTMLE is thus warranted. First, it is relevant to note that under lack of positivity it might be harder to estimate the OR well and both the small- and large-sample performance of our estimator is very much tied to the performance of the OR estimator. While recent developments such as HAL-MLE theoretically ensure that the $n^{-1/4}$-consistency requirement is satisfied under weak conditions, there are no finite-sample guarantees of adequate performance. Therefore, it would seem particularly beneficial for our proposed estimator that a flexible estimation framework be employed for the OR. For example, a cross-validation-based regression stacking approach, such as the super learner approach that we used in the data analysis may be particularly beneficial. Nevertheless, contrary to other CTMLE approaches that enjoy collaborative double-robustness, our CTMLE is not doubly-robust: we cannot compensate for an inconsistent OR estimator by using a consistent PS estimator. However, our proposed estimator appears to completely avoid any risk of having a TMLE in which the targeting step decrements the behavior of the initial OR estimator. As a final point of comparison, it appears that variance estimation for this CTMLE is a more challenging task than for standard TMLE, as evidenced by the poor behavior of the confidence intervals in both simulations. It will thus be important in future work to consider alternative strategies for estimating variance (e.g., based on a bootstrap schema). 

Another strategy for providing additional robustness against poor estimation in small-samples is to utilize TMLE frameworks designed for doubly-robust inference \citep{vanderLaan14a,Benkeseretal17}. These estimators involve fitting additional parametric working models for both the OR and PS estimators beyond a typical TMLE implementation. The resultant effect estimator is asymptotically linear even if one of the OR or PS is inconsistently estimated (or is consistently estimated at too slow a rate). The additional fitting aims to minimize the contribution of the second-order remainder to the behavior of the estimator -- essentially the same goal as with our CTMLE proposal. Even when both the OR and adaptive PS estimators are consistent at rates faster than $n^{-1/4}$, this approach could result in finite-sample improvements for a TMLE or efficient CTMLE. We leave to future study whether and how such strategies can be used to further improve finite-sample behavior of estimators. 

Overall, we conclude that the relative finite-sample performance of an efficient TMLE/CTMLE (possibly modified to accommodate doubly-robust inference) and our proposed super-efficient CTMLE will depend on the particular type of data distribution. Clearly, if one has a-priori knowledge of the propensity score, in particular that it does not suffer from positivity issues, then an efficient TMLE or CTMLE strategy may be the favorable estimator. On the other hand, if the PS is poorly understood and/or the parameter is weakly identifiable, then the super-efficient CTMLE may be a better option. 

In future work, we will generalize our asymptotic linearity results to the more general CAR setting described in Appendix B. Such extensions would allow us to tackle other challenging problems in causal inference, such as estimation of the counterfactual mean of a treatment administered at several timepoints subject to time-varying confounding. 

\bibliography{refs}

\appendix 
\section*{Appendix A. Regular estimators}
We can view an estimator $\psi_{n}$ of $\psi$ as the byproduct of an algorithm $\hat{\Psi}$ applied to independent
observations $O_{1}, \ldots, O_{n}$ drawn from a distribution $P \in \mathcal{M}$. We consider a fluctuation $\{P_{h} :  h \in  H\}$, where $H$ is an index set, satisfying \[
\frac{d}{dh} \mbox{log}\left[\frac{dP_h}{dP}(O)\right]\bigg|_{h=0} = s(O) \ ,
\]  
with $\E_{P} [s(O)] = 0$ and $\E_{P}  [s^2(O)] <  \infty$. We say that $\psi_n$ is 
a regular estimator of $\psi$ at $P$ if, for any such fluctuation, the estimator $\psi_{n,n^{-1/2}}$ of the parameter $\Psi(P_{n^{-1/2}})$ obtained  by applying the algorithm  $\hat{\Psi}$ on independent
observations $O_{1}$, \ldots, $O_{n}$ drawn from $P_{n^{-1/2}}$ is such that
$n^{1/2} [\psi_{n,n^{-1/2}} - \Psi(P_{n^{-1/2}})]$  converges in distribution to a random variate 
whose distribution does not depend on $s$. See Chapter 24 of \citet{Vaart98} for more on regularity in the context of efficient estimation. 

\section*{Appendix B. Linearization of plug-in estimator}
Below, we make use of empirical process notation, writing $Pf$ to denote $\int f(o) dP(o)$ for a given $P$-integrable function $f$ and for each $P\in\mathcal{M}$. We also denote by $P_n$ the empirical distribution function based on $O_1,\ldots,O_n$, so $P_n f = n^{-1} \sum_{i=1}^n f(O_i)$. A linearization of the parameter allows us to write \begin{align*}
\Psi(Q_n^*)-\Psi(Q_0)\ &=\ -P_0 \Dstarn + \Remstar \\
&= (P_n - P_0) \Dstarn - P_n \Dstarn \\
&\hspace{2.4in} + \Remstar \\ 
&= (P_n - P_0) \Dstarlim - P_n \Dstarn \\
&\hspace{0.2in} + (P_n - P_0) [\Dstarn - \Dstarlim] \\
&\hspace{2.4in} + \Remstar  \ . 
\end{align*}
where $\Remstar$ is defined in (\ref{remainder}). We assume \begin{enumerate}
	\item[(i)] $P_0 \Dstarlim = 0$; 
	\item[(ii)] $P_n \Dstarn = o_{\text{p}}(n^{-1/2})$; 
	\item[(iii)] $(P_n - P_0) [\Dstarn - \Dstarlim] = o_{\text{p}}(n^{-1/2})$. 
\end{enumerate}
Assumption (i) stipulates that the EIF have mean zero, which is satisfied, for example, if either $\bar{Q}^* = \bar{Q}_0$ or $\bar{G} = \bar{G}_0$. Assumption (ii) stipulates that the EIF estimating equation is approximately solved. Because $\bar{Q}_n^*$ is a targeted estimate of $\bar{Q}_0$, this assumption is trivially satisfied, since $P_n \Dstarn = 0$. The third assumption is satisfied if, for example, $P_0[ \Dstarn - \Dstarlim ]^2$ converges in probability to zero and $\Dstarn - \Dstarlim$ falls in $P_0$-Donsker class with probability tending to 1. 

\section*{Appendix C. Generalization to coarsened at random (CAR) data structures}
In this Appendix, we describe how the ideas presented in the main text generalize to coarsened at random (CAR) data structures. 
Suppose we observe $n$ i.i.d. copies $O_1,\ldots,O_n$ of $O=\Phi(C,X)\sim P_0\in {\cal M}$ for a full-data random variable $X$, censoring variable $C$, and a many to one-mapping $\Phi$. We assume that the conditional distribution $G$ of $C$, given $X$, satisfies the CAR assumption. Our model ${\cal M}=\{P_{P_X,G}:P_X\in {\cal M}^F,G\in {\cal G}\}$ for the observed data is implied by the full-data  model ${\cal M}^F$ for the full-data distribution $P_{X,0}$ and censoring model ${\cal G}$ for the true conditional distribution $G_0$ of censoring. Thus, the data distribution $P_0$ is parameterized as $P_0=P_{P_{X,0},G_0}$. A full-data target parameter is defined as a mapping $\Psi^F:{\cal M}^F\rightarrow\openr^d$ so that $\psi^F_0=\Psi^F(P_{X,0})$ is the target estimand of interest. We assume that $\Psi^F$ is pathwise differentiable at $P_X\in {\cal M}^F$ with EIF relative to this model $D^F(\cdot \mid P_X)$. We define the exact second-order remainder as 
\[
R_2^F(P_X,P_{X,0}) = \Psi^F(P_X)-\Psi^F(P_{X,0})+P_{X,0} D^F(\cdot \mid P_X) \ .
\]
We assume that the full-data parameter is identifiable so that there exists a $\Psi:{\cal M} \rightarrow \bm{\Psi}$, where $\bm{\Psi}$ is the parameter space, so that $\Psi(P_{P_X,G}) = \Psi^F(P_X)$ for all $P_{P_X,G}\in {\cal M}$.
The statistical estimand is thereby defined as $\Psi(P_0)$. We assume that $\Psi$ is a pathwise differentiable function of $P$ with EIF under $P$ relative to $\mathcal{M}$, $D^*(P)=D^*(P_X,G)$. Such parameters allow an expansion $\Psi(P)-\Psi(P_0)=(P-P_0)D^*(\cdot \mid P)+R_2(P,P_0)$, where \[
R_2(P,P_0) = \Psi(P) - \Psi(P_0) + P_0D^*(\cdot \mid P)
\] 
is the exact second order remainder for the observed data target parameter.

Suppose that the target estimand $\Psi(P)$ depends on $P$ through a functional $P \rightarrow Q(P)$, and that we have a loss function $L(\cdot \mid Q)$ such that that $Q_0 = \arg\min_{Q \in \mathcal{Q}} P_0 L(\cdot \mid Q)$, and 
$\mathcal{Q} = \{Q(P):P\in {\cal M}\}$ is the parameter space for $Q$. Therefore, with an abuse of notation, we may denote the statistical target parameter as $\Psi(Q)$, i.e., a function of $Q(P)$. Though $Q$ is a parameter of the observed data distribution, it is is also a parameter of the full-data distribution $P_X$, so it can be viewed both as a mapping $P\rightarrow Q(P)$ as well as $P_X \rightarrow Q(P_X)$. 

In addition, let this parameter $P\rightarrow Q(P)$ be chosen so that the EIF $D^*(\cdot \mid P)$ of the target parameter $\Psi$ under $P$ relative to the model $\mathcal{M}$ depends on the observed data distribution $P$ through $Q(P)$ and the censoring mechanism $G$, so that we can parameterize the EIF as $D^*(\cdot \mid Q,G)$. Similarly, we may denote the second-order remainder by $R_2(Q,Q_0,G,G_0)$ to emphasize that it involves a difference between $(Q,G)$ and $(Q_0,G_0)$.

Given an initial estimator $G_n$ of $G_0$ and $Q_n$ of $Q_0$, a TMLE of $\Psi(Q_0))$ defines a least-favorable parametric submodel $\{Q_{n,\epsilon}:\epsilon\}\subset \mathcal{Q}$ through $Q_n$ so that the loss-based score $\frac{d}{d\epsilon}L(\cdot \mid Q_{n,\epsilon})$ at $\epsilon =0$ along this path spans the estimated EIF $D^*(\cdot \mid Q_{n},G_n)$. Such a path is called a local least favorable path, while a universal least favorable path requires that this score at any $\epsilon$ equals $D^*(\cdot \mid Q_{n,\epsilon},G_n)$ \citep{vanderLaan&Gruber15}. 
The TMLE estimates the fluctuation parameter $\epsilon$ of this least favorable path with standard minimum loss-estimation $\epsilon_n=\arg\min_{\epsilon}P_n L(\cdot \mid Q_{n,\epsilon})$, and the corresponding one-step TMLE of the target estimand is the corresponding substitution estimator $\Psi(Q_n^*)$, where $Q_n^*=Q_{n,\epsilon_n}$. 
If the TMLE uses a universal least favorable submodel, then $P_n D^*(\cdot \mid Q_{n}^*,G_n) = 0$ \citep{vanderLaan&Gruber15}, but, even when one uses a local least favorable submodel, if $Q_n$ and $G_n$ are $n^{-1/4}$-consistent estimators, then, under regularity conditions, $P_n D^*(Q_n^*,G_n)=o_{\text{p}}(n^{-1/2})$ (see theorem in Appendix of \citep{vanderLaan15}). 
Thus, by similar arguments to those in Appendix A (i.e., appealing to $L^2(P_0)$-consistency of the EIF), \[
\Psi(Q_{n}^*)-\Psi(Q_0) = P_n D^*(\cdot \mid Q^*,G) + R_2(Q_n^*,Q_0,G_n,G_0) + o_{\text{p}}(n^{-1/2})\ , 
\]
where $Q^*$ is the in-probability limit of $Q_n^*$ and $G$ is the in-probability limit of $G_n$. As in the case of the treatment-specific mean, we can generally argue that $R_2$ is asymptotically negligible whenever both $Q_n^* - Q_0$ and $G_n - G_0$ are $o_{\text{p}}(n^{-1/4})$ in $L^2(P_0)$ norm. Under these conditions, the TMLE $\Psi(Q_n^*)$ is asymptotically efficient. 

Collaborative TMLE is a general tool whereby an estimator of the censoring mechanism $G_0$ is constructed sequentially, Starting with an initial estimator, we iteratively select an update among a set of candidates by maximizing the gain in empirical risk during the corresponding TMLE step. Having built a sequence of such candidate censoring mechanism estimators (and corresponding TMLE), we evaluate the cross-validated risk of each candidate to select the best TMLE among the candidate TMLEs of $Q_0$. The principle of CTMLE is that by maximizing the gain in empirical risk during the TMLE step, we are targeting a dimension-reduced quantity $G_0(\cdot \mid Q_n)$ that adjusts for a rich enough set of variables so that the TMLE stays consistent even when $Q_n$ is inconsistent. There is a whole class of such targets $G_0(Q_n)$ which will result in this consistency. This preservation of consistency of the C-TMLE for a $(Q_n,G_n(Q_n))$ that converges to a $(Q,G_0(Q))$ with possibly $Q\not =Q_0$ and $G_0(Q)\not =G_0$, is referred to as collaborative double robustness \citep{vanderLaan:Gruber10,Gruber&vanderLaan10a}. 

Our proposed CTMLE for the treatment-specific mean avoids iterative building of the fit of the censoring mechanism based on gain during the TMLE-step, but fully relies on $Q_n$ being $n^{-1/4}$-consistent. This strategy may be generalized to the broader CAR setting by choosing an adaptive target $G_0(\cdot \mid Q_n)$ so that the second order remainder \[
R_2(Q_n^*,Q_0,G_n(\cdot \mid Q_n),G_0)=o_{\text{p}}(n^{-1/2}). 
\]
Such a CTMLE is asymptotically linear with influence function equal to $D^*(\cdot \mid Q_0, G_0(\cdot \mid Q_0))$, which will generally have smaller variance than the EIF $D^*(\cdot \mid Q_0, G_0)$. We leave to future work the identification of general strategies for selecting adaptive censoring mechanism target parameters.

\section*{Appendix D. Additional estimators}

\subsection*{Collaborative one-step estimator}
The collaborative one-step estimator can be implemented in the following steps: \begin{enumerate}
	\item {\it estimate OR}: regress $Y$ on $W$ amongst units observed to receive treatment $A = 1$ to obtain OR estimate $\bar{Q}_n$; 
	\item {\it predict outcome}: use estimated OR to obtain a prediction $\bar{Q}_n(W_i)$ for each observed data unit, $i = 1,\dots,n$;
	\item {\it estimate adaptive PS}: regress $A$ on predictions $\bar{Q}_n(W_i)$ to obtain adaptive PS estimate $\bar{G}_n(\cdot \mid \bar{Q}_n)$; 
	\item {\it predict PS}: use estimated PS to obtain prediction $\bar{G}_n(W_i \mid \bar{Q}_n)$ for each observed data unit, $i = 1, \dots, n$; 
	\item {\it evaluate influence function}: use estimated OR and PS to compute $D^*(O_i \mid \bar{Q}_n, Q_{n,W}, \bar{G}_n(\cdot \mid \bar{Q}_n))$ for $i = 1,\dots,n$; 
	\item {\it compute one-step estimate}: $\psi_{n,\text{os}}^\# := n^{-1} \sum_{i=1}^n \bar{Q}_n(W_i) + n^{-1} \sum_{i=1}^n D^*(O_i \mid \bar{Q}_n, \bar{Q}_{n,W}, \bar{G}_n(\cdot \mid \bar{Q}_n))$. 
\end{enumerate}

\subsection*{Collaborative TMLE that targets the ATE directly}

Here we propose a TMLE that can be used to directly estimate the ATE. For each $w \in \mathcal{W}$, we denote by $\bar{Q}^1_n(w)$ an estimate of the OR in the treated $\E_{P_0}(Y \mid A = 1 , W = w)$ and similarly denote by $\bar{Q}_n^1(w)$ an estimate of the OR in the untreated $\E_{P_0}(Y \mid A = 0 , W = w)$. 
\begin{enumerate}
  \item {\it estimate OR}: regress $Y$ on $(A,W)$ to obtain regression estimate $(a,w) \mapsto \bar{Q}_n(a,w)$ (allowing an overload of the $\bar{Q}_n$ notation); 
  \item {\it predict outcome}: use estimated OR to obtain a prediction for each data unit setting $A = 1$ and $A = 0$; $\bar{Q}^1_n(W_i) = \bar{Q}_n(1,W_i)$ and $\bar{Q}_n^0(W_i) = \bar{Q}_n(0, W_i)$, $i = 1,\dots,n$;
  \item {\it estimate adaptive PS}: regress $A$ on predictions $(\bar{Q}^1_n(W_i), \bar{Q}^0_n(W_i))$ to obtain adaptive PS estimate $w \mapsto \bar{G}_n(w \mid \bar{Q}^1_n, \bar{Q}_n^0)$; 
  \item {\it predict PS}: use estimated PS to obtain prediction $\bar{G}_n(W_i \mid \bar{Q}_n^1, \bar{Q}_n^0)$ for each observed data unit, $i = 1, \dots, n$; define $\ell\bar{G}_{n}(A_i, W_i) = A_i \bar{G}_n(W_i \mid \bar{Q}_n^1, \bar{Q}_n^0) + (1 - A_i) 1 - \bar{G}_n(W_i \mid \bar{Q}_n^1, \bar{Q}_n^0)$; 
  \item {\it fit OR working model}: fit logistic regression of outcome $Y$ on covariate $H_n(A, W) := (2A - 1) / \ell\bar{G}_{n}(A, W)$ with offset $\mbox{logit}[\bar{Q}_n(A,W)]$; denote by $\epsilon_n^{\circ}$ the estimated coefficient; 
  \item {\it target OR estimate}: use OR working model to obtain predictions $\bar{Q}_n^{\circ}(1, W_i) = \mbox{expit}\{\mbox{logit}[\bar{Q}_n(1, W_i)] + \epsilon_n^{\circ} \ell\bar{G}_n(1, W_i) \}$ and $\bar{Q}_n^{\circ}(0, W_i) = \mbox{expit}\{\mbox{logit}[\bar{Q}_n(0, W_i)] + \epsilon_n^{\circ} \ell\bar{G}_n(0, W_i) \}$ for each observed data unit, $i = 1,\dots,n$; 
  \item {\it compute plug-in estimate}: the CTMLE of the ATE $\psi_n^circ := n^{-1} \sum_{i=1}^n [\bar{Q}_n^{\circ}(1, W_i) - \bar{Q}_n^{\circ}(0, W_i)]$. 
\end{enumerate}

\subsection*{Cross-validated CTMLE}

By using sample splitting within the TMLE framework, the so-called cross-validated TMLE (CV-TMLE), we can avoid the Donsker class condition in regularity condition (iv) of Appendix E below. Here, we outline how one can construct a super-efficient CV-CTMLE. The proof of the asymptotic linearity and thereby weak convergence of the estimator is completely analogous to the general proof of asymptotic efficiency of the CV-TMLE \citep{Zheng&vanderLaan12} combined with the specific proof presented in Appendix E. 

Consider $V$-fold sample splitting so that we have $V$ splits of the sample in a training and complementary validation sample. We denote the empirical distributions of the training and validation sample by $P_{n,v}^0$ and $P_{n,v}^1$, respectively, $v=1,\ldots,V$. As in Section \ref{sec:methods}, we denote by $\bar{Q}_{n,v}^0$, $\bar{G}_{n,v}^0(\cdot \mid Q_{n,v}^0)$ the estimators of the relevant nuisance quantities obtained in the $v$-th training sample. 
For a given split $v$, we define a working model for the OR, $\{\bar{Q}_{n,v,\epsilon}^1 = \mbox{expit}[\mbox{logit}(\bar{Q}_{n,v}^0) + \epsilon H_{n,v}] : \epsilon \in \mathbb{R}\}$, where $H_{n,v}^1(A_i, W_i) = A_i / \bar{G}_n^0(W_i \mid \bar{Q}_{n,v}^0)$. The MLE of $\epsilon$ is found by pooling over the validation samples to minimize the cross-validated risk based on negative log-likelihood loss,  \[
\epsilon_{n} = \arg\min_{\epsilon \in \mathbb{R}} \sum_{v=1}^V P_{n,v}^1 L(\cdot \mid \bar{Q}_{n,v,\epsilon}^1) \ ,
 \] 
where for a given $\bar{Q} \in \bar{\mathcal{Q}}$ and data unit $o$, \[
L(o \mid \bar{Q}) = - \{ y \log \bar{Q}(w) + (1-y) \log[1-\bar{Q}(w)]\} \ .
\] 
Let $\bar{Q}_{n,v}^{1 \#}=\bar{Q}_{n,v,\epsilon_n}^1$ be the resultant targeted OR estimator. The CV-CTMLE of $\psi_0$ is $\psi_{n,\text{cv}}^\# = \frac{1}{V} \sum_{v=1}^V P_{n,v}^1 \bar{Q}_{n,v}^{1\#}$.

\section*{Appendix E. \texttt{R} code for implementing estimators}

In this section, we provide brief snippets of \texttt{R} code for implementing the proposed estimators on simulated data. The following code simulates a sample of size $n = 100$. Here, $W$ is bivariate with a Uniform(0,1) component (\texttt{W1}) and a Bernoulli(1/2) component (\texttt{W2}). The true PS is $\bar{G}_0(w) = \mbox{expit}(w_1 - w_2)$. The true OR is $\bar{Q}_0(w) = \mbox{expit}(w_1w_2 - 1)$. The true value of the treatment-specific mean $\psi_0 \approx 0.32$. \small
\begin{lstlisting}
# sample size
n <- 100
# baseline covariates
W <- data.frame(W1 = runif(n), W2 = rbinom(n, 1, 0.5))
# true propensity score
Gbar0 <- plogis(W$W1 - W$W2)
# simulate treatment
A <- rbinom(n, 1, Gbar0)
# true outcome regression
Qbar0 <- plogis(W$W1 * W$W2 - A)
# simulate outcome
Y <- rbinom(n, 1, Qbar0)
# put all data in a data.frame
obs_dat <- data.frame(W, A = A, Y = Y)
\end{lstlisting}

\normalsize \noindent We now follow the prescribed steps to generate $\psi_n^\#$. \small 
\begin{lstlisting}
# 1. estimate OR 
# here we use correctly specified glm
est_or <- glm(Y ~ I(W1*W2), data = obs_dat[A == 1, ],
              family = binomial())

# 2. predict outcome
# on logistic scale
obs_dat$logit.Qbar_n.W <- 
    predict(est_or, newdata = obs_dat)
# and on probability scale
obs_dat$Qbar_n.W <- plogis(obs_dat$logit.Qbar_n.W)

# 3. estimate adaptive PS 
# here we use natural splines
library(splines)
est_adapt_ps <- glm(A ~ ns(Qbar_n.W, df = 2), 
                    family = binomial(),
                    data = obs_dat)

# 4. predict PS
# on probability scale
obs_dat$Gbar_n.W.Qbar_n <- 
    predict(est_adapt_ps, type = 'response') 

# 5. fit OR working model
# define covariate at observed values of A
obs_dat$H_n.AW <- A / obs_dat$Gbar_n.W.Qbar_n
# define covariate at A = 1
obs_dat$H_n.1W <- 1 / obs_dat$Gbar_n.W.Qbar_n
# fit working model
est_wm <- glm(Y ~ -1 + offset(logit.Qbar_n.W) + H_n.AW,
              family = binomial(), data = obs_dat)
# get coefficient
eps_hash_n <- est_wm$coefficients

# 6. target OR estimate
# generate prediction from working model with 
# covariate H_n.1W
Qbar_hash_n <- plogis(obs_dat$logit.Qbar_n.W + 
                        eps_hash_n * obs_dat$H_n.1W)

# 7. compute plug-in
psi_hash_n <- mean(Qbar_hash_n)
\end{lstlisting}
\normalsize

\noindent To compute the one-step estimator, we replace steps 5-7 as follows. \small

\begin{lstlisting}
# 5. evaluate influence function
Dstar.O <- obs_dat$H_n.AW * (Y - obs_dat$Qbar_n.W) + 
              obs_dat$Qbar_n.W - mean(obs_dat$Qbar_n.W)

# 6. compute one-step estimator
psi_hashos_n <- mean(obs_dat$Qbar_n.W) + mean(Dstar.O)
\end{lstlisting}
\normalsize

\section*{Appendix F. Details for Theorem 1}

\subsection*{Regularity conditions for CTMLE}
Given a $P_0$-measurable function $f_n$, we define the $L^2(P_0)$-norm of $f_n$ as $\pl f_n \pl_{P_0} := P_0(f_n^2)^{1/2}$. For each $w \in \mathcal{W}$ and given $\bar{Q}_n$ and $\bar{Q}_0$, we define \[
	\bar{G}_{0}(w \mid \bar{Q}_n, \bar{Q}_0) := \pr_{P_0}[A = 1 \mid \bar{Q}_n(W) = \bar{Q}_n(w), \bar{Q}_0(W) = \bar{Q}_0(w)] \ . 
\]
Similarly, we define \begin{multline*}
  \bar{G}_0(w \mid \bar{Q}_n, \bar{Q}_n^\#, \bar{Q}_0) := \pr_{P_0}[A = 1 \mid \bar{Q}_n(W) = \bar{Q}_n(w), \bar{Q}_n^\#(W) = \bar{Q}_n^\#(w), \\
  \bar{Q}_0(W) = \bar{Q}_0(w)] \ . 
\end{multline*}
Theorem 1 requires the following regularity conditions: \begin{enumerate}
	\item[(i)] $P_n \Dstarnadapt = o_{\text{p}}(n^{-1/2})$ ; 
	\item[(ii)] $\pl \bar{Q}_n^\# - \bar{Q}_0 \pl_{P_0} = o_{\text{p}}(n^{-1/4})$ and $\pl \bar{Q}_n - \bar{Q}_0 \pl_{P_0} = o_{\text{p}}(n^{-1/4})$; 
	\item[(iii)] $\pl \bar{G}_n(\cdot \mid \bar{Q}_n) - \bar{G}_0(\cdot \mid \bar{Q}_n) \pl_{P_0} = o_{\text{p}}(n^{-1/4})$ ; 
	\item[(iv)] $P_0[ \Dstarnadapt - \Dstarzeroadapt ]^2$ converges in-probability to zero and $\Dstarnadapt - \Dstarzeroadapt$ falls in a $P_0$-Donsker class with probability tending to one as $n$ tends to infinity;
	\item[(v)] $\bar{G}_0(\cdot \mid \bar{Q}_n, \bar{Q}_0)$ is differentiable in $\bar{Q}_0$ and \[
		\underset{n \rightarrow \infty}{\mbox{lim sup}} \left\{ \sup_{u,v} \bigg| \frac{d}{dv} \pr_{P_0}[A = 1 \mid \bar{Q}_n(W) = u, \bar{Q}_0(W) = v)\bigg| \right\} < \infty \ , 
		\] 
		where the supremum over $(u,v)$ is over the support of $(\bar{Q}_n(W),\bar{Q}_0(W))$;
  \item[(vi)] $P_0[\{\bar{G}_0(\cdot \mid \bar{Q}_n,\bar{Q}_n^\#,\bar{Q}_0) - \bar{G}_0(\cdot \mid \bar{Q}_n,\bar{Q}_0)\}/\bar{G}_0(\cdot \mid \bar{Q}_n) \ (\bar{Q}_n^\# - \bar{Q}_0)  ] = o_{\text{p}}(n^{-1/2})$. 
\end{enumerate}

\subsection*{Regularity conditions for collaborative one-step}
To prove asymptotic linearity of the collaborative one-step estimator, we require conditions (iii) and (v) above in addition to the following regularity conditions: \begin{enumerate}
  \item[(iia)] $\pl \bar{Q}_n - \bar{Q}_0 \pl_{P_0} = o_{\text{p}}(n^{-1/4})$; 
  \item[(iva)] $P_0[ D^*(\cdot \mid \bar{Q}_n, Q_{n,W}, \bar{G}_n(\cdot \mid \bar{Q}_n)) - D^*(\cdot \mid \bar{Q}_0, Q_{0,W}, \bar{G}_0(\cdot \mid \bar{Q}_0)) ]^2$ converges in-probability to zero and $D^*(\cdot \mid \bar{Q}_n, Q_{n,W}, \bar{G}_n(\cdot \mid \bar{Q}_n)) - D^*(\cdot \mid \bar{Q}_0, Q_{0,W}, \bar{G}_0(\cdot \mid \bar{Q}_0)$ falls in a $P_0$-Donsker class with probability tending to one as $n$ tends to infinity.
\end{enumerate}

\subsection*{Proof of Theorem 1}
\noindent To prove Theorem 1, we require the following lemma. 
\begin{lemma} \label{cruciallemmaforctmle}
Let $X_n$ and $X_0$ be functions of $W$ with support $\mathcal{X}_{n0} := \mathcal{X}_n \times \mathcal{X}_0$. For each $(u,v) \in \mathcal{X}_{n0}$, define $f_{0\mid n0}(u,v) := \E_{P_0}[A \mid X_n(W) = u, X_0(W) = v]$ and $f_{0\mid n}(u) := \E_{P_0}[A \mid X_n(W) = u]$. If $f_{0\mid n0}$ is differentiable in its second coordinate and \[
\underset{n \rightarrow \infty}{\mbox{lim sup}} \left[ \ \underset{(u,v) \in \mathcal{X}_{n0}}{\mbox{sup}} \bigg| \frac{d}{dv} f_{0 \mid n0}(u,v) \bigg| \ \right] < \infty \ , 
\]
then \[
	\pl f_{0 \mid n0} - f_{0 \mid n} \pl_{P_0}^2 \ \le C \pl X_n - X_0 \pl_{P_0}^2 \ . 
\]
\end{lemma}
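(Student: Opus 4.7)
The plan is to rewrite the $L^2(P_0)$ squared distance as a conditional variance and then bound that variance using the Lipschitz property induced by the derivative hypothesis. The crucial observation is the tower property: for every $u \in \mathcal{X}_n$,
\[
  f_{0 \mid n}(u) \ = \ \E_{P_0}[A \mid X_n(W) = u] \ = \ \E_{P_0}\bigl[\E_{P_0}[A \mid X_n(W), X_0(W)] \,\bigm|\, X_n(W) = u\bigr] \ = \ \E_{P_0}[f_{0 \mid n0}(u, X_0(W)) \mid X_n(W) = u].
\]
Hence, conditioning first on $X_n(W)$ and using this identity,
\[
  \pl f_{0 \mid n0} - f_{0 \mid n} \pl_{P_0}^2 \ = \ \E_{P_0}\bigl[\mathrm{Var}_{P_0}[\,f_{0 \mid n0}(X_n(W), X_0(W)) \mid X_n(W)\,]\bigr].
\]

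Next, I would use the standard fact that the variance of a random variable is bounded above by its mean-squared deviation from any constant. Conditionally on $X_n(W) = u$, the quantity $f_{0 \mid n0}(u, u)$ is a constant in $X_0(W)$, so
\[
  \mathrm{Var}_{P_0}[\,f_{0 \mid n0}(u, X_0(W)) \mid X_n(W) = u\,] \ \leq \ \E_{P_0}\bigl[(f_{0 \mid n0}(u, X_0(W)) - f_{0 \mid n0}(u, u))^2 \bigm| X_n(W) = u\bigr].
\]
The derivative hypothesis, combined with the fundamental theorem of calculus, implies that for all sufficiently large $n$ there is a constant $C_0 < \infty$ such that $|f_{0 \mid n0}(u, v) - f_{0 \mid n0}(u, v')| \leq C_0 |v - v'|$ uniformly over the relevant arguments. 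Applying this Lipschitz bound and then unconditioning gives
\[
  \pl f_{0 \mid n0} - f_{0 \mid n} \pl_{P_0}^2 \ \leq \ C_0^2 \, \E_{P_0}\bigl[ \E_{P_0}[(X_0(W) - X_n(W))^2 \mid X_n(W)] \bigr] \ = \ C_0^2 \, \pl X_n - X_0 \pl_{P_0}^2,
\]
establishing the claim with $C := C_0^2$.

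The main obstacle I anticipate is the technical justification of the Lipschitz step. The derivative is assumed bounded over the Cartesian product $\mathcal{X}_n \times \mathcal{X}_0$, but the argument above requires applying the Lipschitz inequality along the segment from $(u, X_0(W))$ to $(u, u)$; this needs the diagonal point and the connecting segment to lie in the domain where $f_{0 \mid n0}$ is defined and the derivative bound is valid. This is mild in the regimes of interest (where $X_n \to X_0$ and the supports largely overlap), but one must either invoke a convexity of $\mathcal{X}_0$ along the second coordinate or extend $f_{0 \mid n0}$ continuously so that the derivative bound persists; beyond this, the remaining steps are routine applications of iterated expectation and Jensen-type inequalities.
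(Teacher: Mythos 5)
Your proof is correct and is essentially the paper's own argument: the paper phrases your law-of-total-variance step as the observation that $f_{0\mid n}$ is the $L^2$-projection of $f_{0\mid n0}$ onto functions of $X_n(W)$, bounds the resulting residual by the distance to the candidate $u \mapsto f_{0\mid n0}(X_n(u), X_n(u))$, and then applies the same mean-value-theorem/Lipschitz bound using the uniform derivative hypothesis. If anything your write-up is slightly cleaner, since the variance-versus-squared-deviation step correctly yields an inequality where the paper asserts an equality, and you explicitly flag the domain caveat for the diagonal point that the paper leaves implicit.
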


\begin{proof}
We have that $f_{0 \mid n}$ is the projection of $f_{0 \mid n0}$ onto the subspace $L^2(X_n)$ of all functions that only depend on $X_n(W)$ endowed with the usual covariance as inner product. This is a subspace of $L^2(X_n,X_0)$, the space of all functions that depend on $X_n$ and $X_0$. Thus, 
\begin{align*}
&\pl f_{0\mid n0} - f_{0 \mid n} \pl_{P_0}^2 \\
&\hspace{0.1in} = \inf_{g \in L^2(X_n)}\pl f_{0\mid n0} - g \pl^2_{P_0}\\
&\hspace{0.1in} = \inf_{g\in L^2(X_n)} \int [f_{0\mid n0}(X_n(u), X_0(u)) - g(X_n(u))]^2 dQ_{0,W}(u)\\
&\hspace{0.1in} = \inf_{g\in L^2(X_n)} \int [f_{0\mid n0}(X_n(u), X_0(u)) - f_{0 \mid n0}(X_n(u), X_n(u)) \\
&\hspace{1.4in} + f_{0 \mid n0}(X_n(u), X_n(u)) - g(X_n(u))]^2 dQ_{0,W}(u) \\
&\hspace{0.1in} = \int [f_{0\mid n0}(X_n(u), X_0(u)) - f_{0 \mid n0}(X_n(u), X_n(u))]^2 dQ_{0,W}(u) \\
&\hspace{0.6in} + \inf_{g \in L^2(X_n)} \biggl\{ \int[f_{0 \mid n0}(X_n(u), X_n(u)) - g(X_n(u))]^2 dQ_{0,W}(w) \\
&\hspace{1.0in} + 2\int [f_{0\mid n0}(X_n(u), X_0(u)) - f_{0 \mid n0}(X_n(u), X_n(u))] \\ 
&\hspace{1.4in} \times [f_{0 \mid n0}(X_n(u), X_n(u)) - g(X_n(u))] dQ_{0,W}(w) \biggr\} \ . 
\end{align*}
The infimum is attained at $u \rightarrow f_{0 \mid n0}(X_n(u), X_n(u))$. Thus,
\[
\pl f_{0\mid n0} - f_{0 \mid n} \pl_{P_0}^2 = \int [f_{0\mid n0}(X_n(u), X_0(u)) - f_{0 \mid n0}(X_n(u), X_n(u))]^2 dQ_{0,W}(u) \ .
\]
Because $f_{0\mid n0}$ is differentiable in its second coordinate, 
\begin{align*}
f_{0\mid n0}(X_n(u), X_0(u)) &= f_{0\mid n0}(X_n(u), X_n(u)) \\
&\hspace{0.4in} + \frac{d}{dv} f_{0\mid n0}(X_n(u), v)\big|_{v=\xi(X_n(u),X_0(u))}[X_0(u)-X_n(u)],
\end{align*}
for an intermediate point $\xi(X_n(u),X_0(u))$ between $X_n(u)$ and $X_0(u)$. Due to the assumption on uniform bound on the derivative of $f_{0 \mid n0}$, \[ 
| \ f_{0\mid n0}(X_n(u), X_0(u)) - f_{0\mid n0}(X_n(u), X_n(u)) \ | \le C | \ X_n(u) - X_0(u) \ | 
\] 
for some $C<\infty$. Thus, we conclude that 
\[
\pl f_{0\mid n0} - f_{0 \mid n} \pl_{P_0}^2 \ \leq C \pl X_n - X_0 \pl_{P_0}^2 \ .
\]
\end{proof}

\noindent Theorem 1 can now be proven as follows. \begin{proof}
An exact linearization of $\psi_{n}^\#$ gives \begin{align*}
\psi_n^\# - \psi_0 \ &=\ -P_0 \Dstarnadapt + \Remstaradapt \\
&= (P_n - P_0) \Dstarnadapt - P_n \Dstarnadapt \\
&\hspace{2.4in} + \Remstaradapt \\ 
&= (P_n - P_0) \Dstarzeroadapt - P_n \Dstarnadapt \\
&\hspace{0.2in} + (P_n - P_0) [\Dstarnadapt - \Dstarzeroadapt] \\
&\hspace{2.4in} + \Remstaradapt  \ . 
\end{align*}
By assumption (i) $P_n \Dstarnadapt = o_{\text{p}}(n^{-1/2})$, while assumption (iv) is sufficient to ensure that $(P_n - P_0) [\Dstarnadapt - \Dstarzeroadapt] = o_{\text{p}}(n^{-1/2})$. Furthermore, because of the double-robustness of the EIF, $P_0 \Dstarzeroadapt = 0$. Thus, it remains to show that $\Remstaradapt = o_{\text{p}}(n^{-1/2})$. Now,
\begingroup
\allowdisplaybreaks
\begin{align*}
&\Remstaradapt \\
&\hspace{0.1in} = P_0 \left\{ \left[ \frac{\bar{G}_n(\cdot \mid \bar{Q}_n) - \bar{G}_0}{\bar{G}_n(\cdot \mid \bar{Q}_n)} \right] (\bar{Q}_n^\# - \bar{Q}_0) \right\} \\
&\hspace{0.1in} = P_0 \left\{ \left[ \frac{\bar{G}_n(\cdot \mid \bar{Q}_n) - \bar{G}_0}{\bar{G}_0(\cdot \mid \bar{Q}_n)} \right] (\bar{Q}_n^\# - \bar{Q}_0) \right\} + R_{21,n} \\
&\hspace{0.1in} = P_0 \left\{ \left[ \frac{\bar{G}_n(\cdot \mid \bar{Q}_n) - A}{\bar{G}_0(\cdot \mid \bar{Q}_n)} \right] (\bar{Q}_n^\# - \bar{Q}_0) \right\} + R_{21,n} \\
&\hspace{0.1in} = P_0 \left\{ \left[ \frac{\bar{G}_n(\cdot \mid \bar{Q}_n) - \bar{G}_0(\cdot \mid \bar{Q}_n, \bar{Q}_0)}{\bar{G}_0(\cdot \mid \bar{Q}_n)} \right] (\bar{Q}_n^\# - \bar{Q}_0) \right\} + R_{21,n} + R_{22,n}\\ 
&\hspace{0.1in} = P_0 \left\{ \left[ \frac{\bar{G}_n(\cdot \mid \bar{Q}_n) - \bar{G}_0(\cdot \mid \bar{Q}_n) + \bar{G}_0(\cdot \mid \bar{Q}_n) - \bar{G}_0(\cdot \mid \bar{Q}_n, \bar{Q}_0)}{\bar{G}_0(\cdot \mid \bar{Q}_n)} \right] (\bar{Q}_n^\# - \bar{Q}_0) \right\} \\
&\hspace{0.8in} + R_{21,n} + R_{22,n}\\ 
&\hspace{0.1in} = R_{23,n} + R_{24,n} + R_{21,n} + R_{22,n}\ , 
\end{align*}
\endgroup
where \begin{align*}
R_{21,n} &:= P_0 \left\{ \frac{\bar{G}_n(\cdot \mid \bar{Q}_n) - \bar{G}_0}{\bar{G}_0(\cdot \mid \bar{Q}_n) \bar{G}_n(\cdot \mid \bar{Q}_n)} [\bar{G}_0(\cdot \mid \bar{Q}_n) - \bar{G}_n(\cdot \mid \bar{Q}_n)] [\bar{Q}_n^\# - \bar{Q}_0] \right\} \\
R_{22,n} &:= P_0\left[ \left\{\frac{\bar{G}_0(\cdot \mid \bar{Q}_n,\bar{Q}_n^\#,\bar{Q}_0) - \bar{G}_0(\cdot \mid \bar{Q}_n,\bar{Q}_0}{\bar{G}_0(\cdot \mid \bar{Q}_n)} \right\} (\bar{Q}_n^\# - \bar{Q}_0)  \right] \\
R_{23,n} &:= P_0 \left\{ \left[ \frac{\bar{G}_n(\cdot \mid \bar{Q}_n) - \bar{G}_0(\cdot \mid \bar{Q}_n)}{\bar{G}_0(\cdot \mid \bar{Q}_n)} \right] (\bar{Q}_n^\# - \bar{Q}_0) \right\} \\ 
R_{24,n} &:= P_0 \left\{ \left[ \frac{\bar{G}_0(\cdot \mid \bar{Q}_n) - \bar{G}_0(\cdot \mid \bar{Q}_n, \bar{Q}_0)}{\bar{G}_0(\cdot \mid \bar{Q}_n)} \right] (\bar{Q}_n^\# - \bar{Q}_0) \right\} \ . 
\end{align*}
Assumptions (ii) and (iii) are sufficient to ensure that $R_{21,n} = o_{\text{p}}(n^{-1/2})$ and that $R_{23,n} = o_{\text{p}}(n^{-1/2})$. By assumption (vi), $R_{22,n} = o_{\text{p}}(n^{-1/2})$. For $R_{24,n}$, assumption (v) allows us to apply Lemma \ref{cruciallemmaforctmle} with $f_{0 \mid n0} = \bar{G}_0(\cdot \mid \bar{Q}_n, \bar{Q}_0)$ and $f_{0 \mid n} = \bar{G}_0(\cdot \mid \bar{Q}_n)$. Thus, assumption (ii) allows us to conclude that $R_{24,n} = o_{\text{p}}(n^{-1/2})$. 
\end{proof}

\subsection*{Proof of asymptotic linearity of one-step estimator}

The proof of asymptotic linearity for the one-step estimator is similar to that for the CTMLE, but is rather more straightforward as we do not need conditions accounting for the targeted fit. \begin{proof}
A linearization of the plug-in estimator based on $\bar{Q}_n$ combined with the one-step correction yields \begin{align*}
\psi_{n,\text{os}}^\# - \psi_0 \ &=\ (P_n-P_0) \Dstarnadaptinit \\
&\hspace{0.8in} + \Remstaradaptinit \\
&= (P_n - P_0) \Dstarzeroadapt \\
&\hspace{0.2in} + (P_n - P_0) [\Dstarnadaptinit - \Dstarzeroadapt] \\
&\hspace{2.4in} + \Remstaradaptinit  \ . 
\end{align*}
Assumption (iva) is sufficient to ensure that $(P_n - P_0) [\Dstarnadaptinit - \Dstarzeroadapt] = o_{\text{p}}(n^{-1/2})$. Furthermore, because of the double-robustness of the EIF, $P_0 \Dstarzeroadapt = 0$. Thus, it remains to show that $\Remstaradaptinit = o_{\text{p}}(n^{-1/2})$. \begin{align*}
&\Remstaradaptinit \\
&\hspace{0.1in} = P_0 \left\{ \left[ \frac{\bar{G}_n(\cdot \mid \bar{Q}_n) - \bar{G}_0}{\bar{G}_n(\cdot \mid \bar{Q}_n)} \right] (\bar{Q}_n - \bar{Q}_0) \right\} \\
&\hspace{0.1in} = P_0 \left\{ \left[ \frac{\bar{G}_n(\cdot \mid \bar{Q}_n) - \bar{G}_0}{\bar{G}_0(\cdot \mid \bar{Q}_n)} \right] (\bar{Q}_n - \bar{Q}_0) \right\} + R_{21,n,\text{os}} \\
&\hspace{0.1in} = P_0 \left\{ \left[ \frac{\bar{G}_n(\cdot \mid \bar{Q}_n) - A}{\bar{G}_0(\cdot \mid \bar{Q}_n)} \right] (\bar{Q}_n - \bar{Q}_0) \right\} + R_{21,n,\text{os}} \\
&\hspace{0.1in} = P_0 \left\{ \left[ \frac{\bar{G}_n(\cdot \mid \bar{Q}_n) - \bar{G}_0(\cdot \mid \bar{Q}_n, \bar{Q}_0)}{\bar{G}_0(\cdot \mid \bar{Q}_n)} \right] (\bar{Q}_n - \bar{Q}_0) \right\} + R_{21,n,\text{os}} \\ 
&\hspace{0.1in} = P_0 \left\{ \left[ \frac{\bar{G}_n(\cdot \mid \bar{Q}_n) - \bar{G}_0(\cdot \mid \bar{Q}_n) + \bar{G}_0(\cdot \mid \bar{Q}_n) - \bar{G}_0(\cdot \mid \bar{Q}_n, \bar{Q}_0)}{\bar{G}_0(\cdot \mid \bar{Q}_n)} \right] (\bar{Q}_n - \bar{Q}_0) \right\} \\
&\hspace{0.8in} + R_{21,n,\text{os}} \\ 
&\hspace{0.1in} = R_{22,n,\text{os}} + R_{23,n,\text{os}} + R_{21,n,\text{os}}\ , 
\end{align*}
where
\begin{align*}
R_{21,n,\text{os}} &:= P_0 \left\{ \frac{\bar{G}_n(\cdot \mid \bar{Q}_n) - \bar{G}_0}{\bar{G}_0(\cdot \mid \bar{Q}_n) \bar{G}_n(\cdot \mid \bar{Q}_n)} [\bar{G}_0(\cdot \mid \bar{Q}_n) - \bar{G}_n(\cdot \mid \bar{Q}_n)] [\bar{Q}_n - \bar{Q}_0] \right\} \\
R_{22,n,\text{os}} &:= P_0 \left\{ \left[ \frac{\bar{G}_n(\cdot \mid \bar{Q}_n) - \bar{G}_0(\cdot \mid \bar{Q}_n)}{\bar{G}_0(\cdot \mid \bar{Q}_n)} \right] (\bar{Q}_n - \bar{Q}_0) \right\} \\ 
R_{23,n,\text{os}} &:= P_0 \left\{ \left[ \frac{\bar{G}_0(\cdot \mid \bar{Q}_n) - \bar{G}_0(\cdot \mid \bar{Q}_n, \bar{Q}_0)}{\bar{G}_0(\cdot \mid \bar{Q}_n)} \right] (\bar{Q}_n - \bar{Q}_0) \right\} \ . 
\end{align*}
Assumptions (ii) and (iii) are sufficient to ensure that $R_{21,n,\text{os}} = o_{\text{p}}(n^{-1/2})$ and that $R_{22,n,\text{os}} = o_{\text{p}}(n^{-1/2})$. For $R_{23,n,\text{os}}$, assumption (v) allows us to apply Lemma \ref{cruciallemmaforctmle} with $f_{0 \mid n0} = \bar{G}_0(\cdot \mid \bar{Q}_n, \bar{Q}_0)$ and $f_{0 \mid n} = \bar{G}_0(\cdot \mid \bar{Q}_n)$. Thus, assumption (ii) allows us to conclude that $R_{23,n,\text{os}} = o_{\text{p}}(n^{-1/2})$. 
\end{proof}

\subsection*{Discussion of regularity conditions}

Regularity condition (i) stipulates that the EIF estimating equation is approximately solved. While this can generally be ensured by utilizing a TMLE based on a universally least favorable parametric submodel \citep{vanderLaan&Gruber15}, we have instead proposed a TMLE based on a locally least favorable submodel. Nevertheless, the theorem in the Appendix of \citet{vanderLaan15} suggests that because (by assumption) $\bar{Q}_n$ and $\bar{G}_n(\cdot \mid \bar{Q}_n)$ are $n^{-1/4}$-consistent estimators of $\bar{Q}_0$ and $\bar{G}_0(\cdot \mid \bar{Q}_n),$ respectively, then we can expect regularity condition (i) to be satisfied for the proposed TMLE. 

Regularity condition (ii) stipulates that the $L^2(P_0)$-norm of both the initial $\bar{Q}_n - \bar{Q}_0$ and targeted $\bar{Q}_n^\# - \bar{Q}_0$ are $o_{\text{p}}(n^{-1/4})$. The former condition is generally sufficient to ensure the latter \citep{vanderLaan&Gruber15}. Regularity condition (iii) similarly stipulates that given $\bar{Q}_n$ our estimator of $\bar{G}_0(\cdot \mid \bar{Q}_n)$ converges with respect to $L^2(P_0)$-norm faster than $n^{-1/4}$. These assumptions can be satisfied under weak conditions by using the highly adaptive lasso minimum loss estimator (HAL-MLE) to estimate $\bar{Q}_0$ and $\bar{G}_0(\cdot \mid \bar{Q}_n)$ \citep{vanderLaan15,Benkeser&vanderLaan16}. Alternatively, the highly adaptive lasso can be included as a candidate in a cross-validation-based estimator selection routine or ensemble routine (i.e., a super learner \citet{vanderLaan:Polley:Hubbard07}), while maintaining the relevant convergence properties \citep{vanderLaan15}. 

Regularity condition (iv) is an empirical process condition that places restrictions on the adaptivity of the initial estimators. As above, this condition can be satisfied under weak conditions by selecting the HAL-MLE as initial estimator. Alternatively, if all candidate estimators in a super learner routine themselves satisfy this condition, then the resultant ensemble estimator will also satisfy this condition. On the other hand, this assumption can be wholly obviated by using a CVTMLE routine described in Appendix F. 

Regularity condition (v) is a technical condition needed to bound the convergence rate of $\bar{G}_0(\cdot \mid \bar{Q}_n, \bar{Q}_0) - \bar{G}_n(\cdot \mid \bar{Q}_n)$ by the convergence rate of $\bar{Q}_n - \bar{Q}_0$. It is essentially a smoothness assumption on the parameter $\bar{G}_0(\cdot \mid \bar{Q}_n, \bar{Q}_0)$. While this condition (along with condition (ii)) is sufficient to ensure that $R_{23,n} = o_{\text{p}}(n^{-1/2})$, it is not necessary. Therefore, there may be weaker conditions that would imply the negligibility of $R_{23,n}$. We leave exploration of these conditions to future work. Regularity condition (vi) is another technical condition. Essentially this assumption is only needed because we based our adaptive propensity score estimate on the initial estimator $\bar{Q}_n$ rather than a targeted estimator $\bar{Q}_n^\#$, thereby avoiding the need for an iterative TMLE, which can become computationally expensive. Nevertheless, the resultant computational expediency comes at the price of this assumption. We suspect that Lemma 1 could be extended to provided conditions under which the negligibility of this term could be ensured. Alternatively, one could implement an iterative TMLE wherein one alternates between targeting the current OR estimate based on the current adaptive PS estimate and updating the adaptive PS estimate by regressing the treatment on the current OR estimate. Finally, as a third alternative, a TMLE based on a universal least favorable parametric submodel could be implemented \citep{vanderLaan&Gruber15}. However, this would require even greater computational burden as we would likely be required to perform more adaptive PS estimation based on an incrementally updated outcome regression. In our experience, the computationally expedient versions of TMLE generally perform as well as those that allow for weaker assumptions (but are in turn more computationally expensive). 

\section*{Appendix G. Results for one-step estimators}

A comparison of the collaborative vs. standard one-step estimators for Simulation 1 arrives at essentially the same conclusions as for CTMLE vs. TMLE (Figures \ref{onestep_estimator_results_sim1} and \ref{onestep_coverage_results_sim1}). 

\begin{figure}[ht]
\centering
\includegraphics[width = \textwidth]{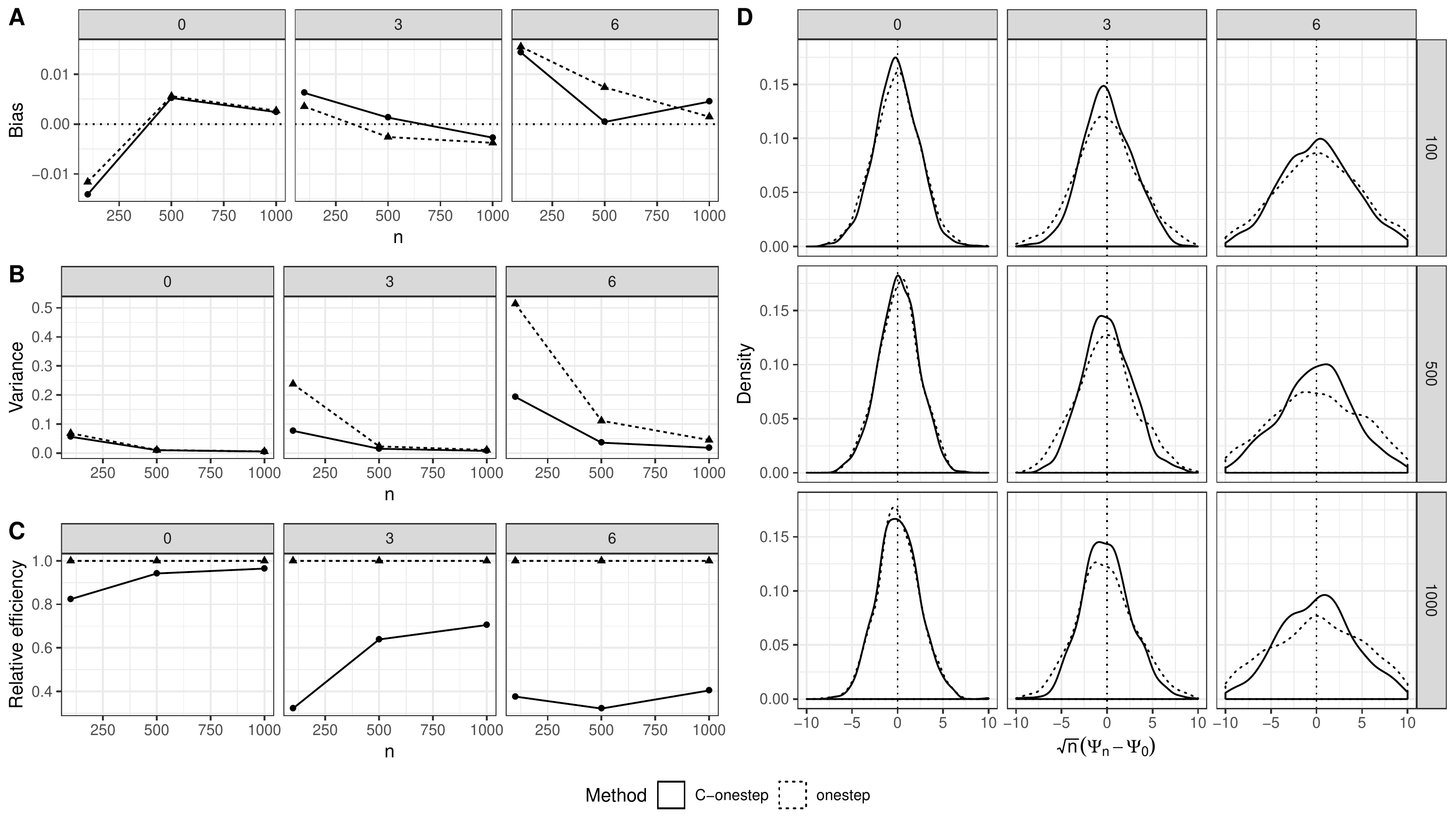}
\caption{Results for simulation 1 comparing collaborative one-step and standard one-step. Each panel displays a different performance metric and each sub-panel displays results for $\gamma \in \{0,3, 6\}$, representing, respectively, settings with no positivity, moderate positivity, and extreme positivity violations. Panel A: Bias of the estimators. Panel B: Variance of the estimators. Panel C: Relative efficiency (defined as ratio of mean squared-error) of collaborative vs. standard one-step. Numbers below one indicate greater efficiency of collaborative one-step. Panel D: Kernel density estimates of sampling distributions using a Gaussian kernel and Silverman's rule of thumb bandwidth \citep{silverman1986density}.}
\label{onestep_estimator_results_sim1}
\end{figure}

\begin{figure}[ht]
\centering
\includegraphics[width = 0.5\textwidth]{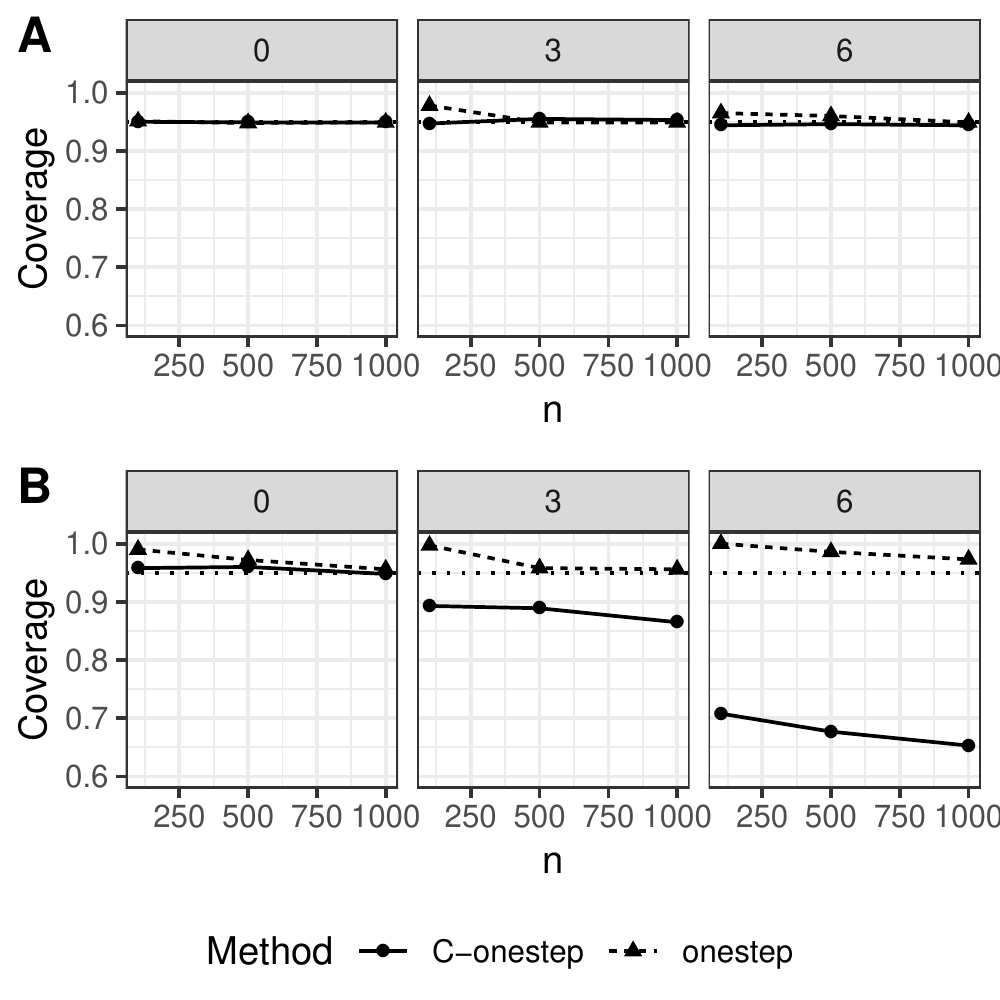}
\caption{Results for simulation 1 comparing confidence intervals for collaborative one-step and standard one-step. Each panel displays the coverage as a function of sample size and each sub-panel displays results for $\gamma \in \{0,3, 6\}$. Panel A: Coverage probability of nominal 95\% oracle confidence intervals. Panel B: Coverage probability of nominal 95\% confidence intervals based on estimated standard errors.}
\label{onestep_coverage_results_sim1}
\end{figure}

\section*{Appendix H. Details of data analysis}

\subsection*{Missing covariate data}

There were several variables with small amounts of missing data: paternal age (70 missing), paternal education (118), maternal age at first child birth (52), and household income (13). For each of these variables, we imputed missing values using super learning. First, we fit a super learner (based on 10-fold cross-validation) that used a generalized linear model, a random forest, and an intercept only model using all participants' data who had the variable measured. We then filled in the missing values with the predicted value from this super learner. Finally, we created an additional indicator variable that indicated whether or not the value in the corresponding variable was imputed or actually observed. While this approach may not be the ideal approach to handling missing covariate data (see, for example, the discussion in \citet{groenwold2012missing}), we feel that the amount of missingness is low enough that bias incurred by covariates missing not at random is likely to be small. 

\subsection*{Super learner estimation of multi-level treatment}

We took a simple approach to generating an estimate of the propensity for pre-, full-, and post-term births. First, we estimated the probability of a post-term birth by fitting a super learner with outcome $\ind(A = \mbox{post-term})$. For each $w \in \mathcal{W}$, we denote by $\bar{G}_{0,\text{post}}(w) := \mbox{pr}_{P_0}(A = \text{post-term} \mid W = w)$. Then we fit a super learner with outcome $\ind(A = \text{pre-term})$ but restricted to the subset of participants with $A \ne \text{post-term}$. This super learner provides an estimate of $\tilde{G}_{0,\text{pre}}(w) := \mbox{pr}_{P_0}(A = \mbox{pre-term} \mid W = w, A \ne \text{post-term})$, which can be mapped into an estimate $\bar{G}_{n,\text{pre}}$ of the propensity for pre-term birth $\bar{G}_{0,\text{pre}}$: \[
  \bar{G}_{n,\text{pre}}(w) := \tilde{G}_{n,\text{pre}}(w) [1 - \bar{G}_{n,\text{post}}(w)] \ . 
\]
Finally, we let the estimated propensity for full-term birth $\bar{G}_{n,\text{full}} = 1 - \bar{G}_{n,\text{post}} - \bar{G}_{n,\text{pre}}$. 

\subsection*{Super learner results}

The super learner for the outcome regression identified lasso as the best single algorithm, but assigned ensemble weights to several algorithms (Table \ref{sl_or_table}). Overall, the super learner ensemble was found to outperform any single algorithm and the cross-validation-selected algorithm (i.e., the discrete super learner) (Table \ref{sl_or_table}). 

\begin{table}[ht]
\centering
\begin{tabular}{lcccc}
  \hline
Algorithm & Avg. risk & SE risk & Range risk & SL coef. \\ 
  \hline
  \texttt{Super Learner} & 0.767 & 0.021 & (0.698, 0.831) & -- \\ 
  \texttt{Discrete SL} & 0.781 & 0.022 & (0.695, 0.865) & -- \\ 
  \hline
  \texttt{SL.glm} & 0.786 & 0.022 & (0.715, 0.860) & 0.000 \\ 
  \texttt{SL.earth} & 0.837 & 0.024 & (0.766, 0.942) & 0.214 \\ 
  \texttt{SL.ranger} & 0.794 & 0.021 & (0.701, 0.873) & 0.109 \\ 
  \texttt{SL.glmnet} & 0.777 & 0.021 & (0.693, 0.865) & 0.000 \\ 
  \texttt{SL.gbm} & 0.784 & 0.022 & (0.695, 0.880) & 0.145 \\ 
  \texttt{SL.mean} & 1.010 & 0.026 & (0.918, 1.181) & 0.007 \\ 
  \texttt{SL.step.forward} & 0.777 & 0.022 & (0.696, 0.849) & 0.525 \\ 
   \hline
\end{tabular}
\caption{Super learner results for the outcome regression. Avg. risk column is the estimated cross-validated mean squared-error of the algorithms. SE risk is the estimated standard error of the cross-validated risk estimates. Range risk is the range of the risk across the five-folds of cross-validation. SL coef. is the super learner ensemble weight assigned to each candidate algorithm.}
\label{sl_or_table}
\end{table}

\begin{table}[ht]
\centering
\begin{tabular}{lccc}
  \hline
Algorithm & Avg. risk & Range risk & SL coef. \\ 
  \hline
  \texttt{Super Learner} & 0.264 & (0.219, 0.317) & -- \\ 
  \texttt{Discrete SL} & 0.264 & (0.219, 0.316) & -- \\ 
  \hline
  \texttt{SL.glm} & 0.286 & (0.248, 0.330) & 0.011 \\ 
  \texttt{SL.earth} & 0.311 & (0.232, 0.378) & 0.000 \\ 
  \texttt{SL.ranger} & 0.282 & (0.228, 0.342) & 0.000 \\ 
  \texttt{SL.glmnet} & 0.264 & (0.219, 0.315) & 0.000 \\ 
  \texttt{SL.gbm} & 0.264 & (0.219, 0.316) & 0.013 \\ 
  \texttt{SL.mean} & 0.263 & (0.219, 0.314) & 0.723 \\ 
  \texttt{SL.step.forward} & 0.265 & (0.222, 0.325) & 0.254 \\ 
   \hline
\end{tabular}
\caption{Super learner results for the first propensity score model (predicting post-term birth vs. other). Avg. risk column is the estimated cross-validated mean squared-error of the algorithms. Range risk is the range of the risk across the five-folds of cross-validation. SL coef. is the super learner ensemble weight assigned to each candidate algorithm.}
\label{sl_ps1_table}
\end{table}

\begin{table}[ht]
\centering
\begin{tabular}{lccc}
  \hline
Algorithm & Avg. risk & Range risk & SL coef. \\ 
  \hline
\texttt{Super Learner} & 0.440 & (0.371, 0.482) & -- \\ 
  \texttt{Discrete SL} & 0.438 & (0.372, 0.478) & -- \\
  \hline 
  \texttt{SL.glm} & 0.463 & (0.370, 0.553) & 0.011 \\ 
  \texttt{SL.earth} & 0.472 & (0.439, 0.523) & 0.000 \\ 
  \texttt{SL.ranger} & 0.458 & (0.393, 0.497) & 0.000 \\ 
  \texttt{SL.glmnet} & 0.438 & (0.370, 0.479) & 0.000 \\ 
  \texttt{SL.gbm} & 0.437 & (0.370, 0.478) & 0.013 \\ 
  \texttt{SL.mean} & 0.439 & (0.372, 0.482) & 0.723 \\ 
  \texttt{SL.step.forward} & 0.447 & (0.369, 0.482) & 0.254 \\ 
   \hline
\end{tabular}
\caption{Super learner results for the second propensity score model (predicting pre-term birth vs. full-term birth). Avg. risk column is the estimated cross-validated mean squared-error of the algorithms. Range risk is the range of the risk across the five-folds of cross-validation. SL coef. is the super learner ensemble weight assigned to each candidate algorithm.}
\label{sl_ps2_table}
\end{table}

\end{document}